\DeclareMathOperator*{\argmax}{arg\,max}
\providecommand{\U}[1]{\protect\rule{.1in}{.1in}}
\newtheorem{assumption}{Assumption}[section]
\newtheorem{definition}{Definition}[section]
\newtheorem{lemma}{Lemma}[section]
\newtheorem{proposition}{Proposition}[section]
\newenvironment{proof}[1][Proof]{\noindent\textbf{#1.} }{\ \rule{0.5em}{0.5em}}
\begin{document}

\title{A Nonparametric Test of $m$th-degree Inverse Stochastic Dominance\thanks{The authors are grateful to Brendan K. Beare for the constructive comments. This work was supported by the National Natural Science Foundation of China [grant number 72103004].}}
\author{Hongyi Jiang \\China Center for Economic Research\\National School of Development\\ Peking University\\ hyjiang2017@nsd.pku.edu.cn\\
	\and Zhenting Sun\\China Center for Economic Research\\National School of Development\\ Peking University\\
	zhentingsun@nsd.pku.edu.cn \and Shiyun Hu \\China Center for Economic Research\\National School of Development\\ Peking University\\ hushiyun@pku.edu.cn\\}

\maketitle

\begin{abstract}
This paper proposes a nonparametric test for $m$th-degree inverse stochastic dominance which is a powerful tool for ranking distribution functions according to social welfare. We construct the test based on empirical process theory. The test is shown to be asymptotically size controlled and consistent. The good finite sample properties of the test are illustrated via Monte Carlo simulations. We apply our test to the inequality growth in the United Kingdom from 1995 to 2010.  
\end{abstract}
\bigskip

\textbf{Keywords}: Inverse stochastic dominance, social welfare, nonparametric test, ranking distribution functions

\newpage
\section{Introduction}
When we compare distribution functions according to social welfare, we usually rely on second-degree stochastic dominance \citep{atkinson1970measurement}. However, as pointed out by \citet{aaberge2021ranking}, second-degree dominance has limitations on comparing distribution functions that intersect. \citet{aaberge2021ranking} then propose a general approach to ranking intersecting distribution functions based on inverse stochastic dominance (ISD) introduced by \citet{muliere1989note}. 
\citet{aaberge2021ranking} consider two complementary sequences of inverse stochastic dominance criteria: Upward dominance and downward dominance. As demonstrated in \citet{aaberge2021ranking}, upward dominance aggregates the quantile function from below, so it places more emphasis on differences that occur in the lower part of the distribution; downward dominance aggregates the quantile function from above, so it places more emphasis on differences that occur in the upper part of the distribution.
\citet{aaberge2021ranking} also show that ISD of any degree can be given a social welfare interpretation. Though ISD plays an important role in welfare analysis, there are few statistical tools particularly designed for inference on ISD. \citet{aaberge2021ranking} develop distribution theory to test ISD and employ the approach of \citet{sverdrup1976significance} in testing applications.  
\citet{andreoli2018robust} develops a statistical testing approach of inverse stochastic dominance based on a finite set of abscissae, which does not take the quantile function as a process to obtain asymptotic theory.  

The present paper proposes a new statistical method for testing such inverse stochastic dominance based on empirical process theory. The test is constructed following the framework of \citet{BDB14} and \citet{Beare2017improved} for testing Lorenz dominance which is highly related to inverse stochastic dominance and stochastic dominance. {See other tests of Lorenz dominance in  \citet{mcfadden1989testing}, 
\citet{bishop1991international}, \citet{bishop1991lorenz}, \citet{dardanoni1999inference}, and \citet{davidson2000statistical}.} Tests on stochastic dominance can be found in \citet{bishop1989asymptotically}, \citet{anderson1996nonparametric}, \citet{davidson2000statistical}, \citet{barrett2003consistent}, and \citet{linton2005consistent}. 
Suppose there are two cumulative distribution functions (CDFs) $F_1:[0,\infty)\rightarrow \mathbb{R}$ and $F_2:[0,\infty)\rightarrow \mathbb{R}$ of two income (or wealth, etc.) distributions in two populations. 
We follow \citet{aaberge2021ranking} and define two functions $\Lambda_1^m$ ($\tilde\Lambda_1^m$) and $\Lambda_2^m$ ($\tilde\Lambda_2^m$) for $F_1$ and $F_2$ that are associated with the $m$th-degree upward (downward) inverse stochastic dominance. The null hypothesis we are interested in is that $\Lambda_1^m\ge\Lambda_2^m$ ($\tilde\Lambda_1^m\ge\tilde\Lambda_2^m$). We then measure the difference between $\Lambda_1^m$ ($\tilde\Lambda_1^m$) and $\Lambda_2^m$ ($\tilde\Lambda_2^m$) by some particular map $\mathcal{F}$. We construct the test statistic based on $\mathcal{F}$ and establish the asymptotic distribution of the test statistic applying the results from \citet{fang2014inference} and \citet{K17}. We then construct the critical value for the test by employing the bootstrap method of \citet{fang2014inference}. The simulation studies demonstrate the good finite sample properties of the test. 
Finally, we apply our test to the empirical example of the inequality growth in the United Kingdom discussed by \citet{aaberge2021ranking}. Our testing results show that both $3$rd-degree upward and downward inverse stochastic dominances provide a relatively complete ranking of the income distributions.

\emph{\textbf{Notation.}}
We follow \citet{li2022Unified} and introduce the following standard notation.
Throughout the paper, we suppose all the random elements are defined on
a probability space $( \Omega, \mathcal{A}, \mathbb{P} )$. 
Let $\ell^\infty(A)$
denote the set of all bounded real-valued functions on $A$ for every arbitrary set $A$. We equip $\ell^\infty(A)$ with the supremum norm $\Vert \cdot \Vert_{\infty}$ such that
$\Vert f \Vert_{\infty}=\sup_{x\in A} \vert f(x) \vert$ for every $f\in \ell^\infty(A)$.
For every subset $A$ of a metric space,
let $C(A)$ be the set of all continuous real-valued functions on $A$.
For $p\ge1$ and $A\subset\mathbb{R}$, let $L^p(A)$ denote the space of measurable functions such that $\int_{A}|f(t)|^p\mathrm{d}t<\infty$ for every $f\in L^p(A)$. Equip $L^p(A)$ with the norm $\Vert\cdot \Vert_{p}$ such that $\Vert f \Vert_{p}=(\int_{A}|f(t)|^p\mathrm{d}t)^{1/p}$ for every $f\in L^p(A)$.
Let $\leadsto$ denote the weak convergence defined in \citet[p.~4]{van1996weak}. Let $\overset{\mathbb{P}}{\leadsto}$ denote the weak convergence in probability conditional on the sample as defined in \citet[p.~19]{kosorok2008introduction}.

\section{Hypotheses for $m$th-degree Inverse Stochastic Dominance}

We follow \citet{Beare2017improved} and suppose that $F_1$ and $F_2$ satisfy the following assumption.
\begin{assumption}\label{ass.distribution}
\citep{Beare2017improved}
	For $j=1,2$, the CDF $F_{j}$ satisfies $F_j(0)=0$ and is continuously differentiable on the interior of its support, with strictly positive derivative. In addition, each $F_{j}$ has finite $(2+\epsilon)$th absolute moment for some $\epsilon>0$.
\end{assumption}

We let $Q_1$ and $Q_2$ denote the quantile functions corresponding to $F_1$ and $F_2$, respectively. By definition, we have
\begin{align}
Q_j(p)&=\inf\left\{x\in [0,\infty):F_j(x)\ge p \right\},\quad p\in[0,1].
\end{align}
If $F_j$ has finite first moment $\mu_j$ (it does under Assumption \ref{ass.distribution}), the quantile function $Q_j$ is integrable with $\int_0^1 Q_j(p)\mathrm{d}p=\mu_j$. In this case, for $m\ge2$, we follow \citet{aaberge2021ranking} and define functions corresponding to $F_j$ by
\begin{align}
{\Lambda}_{j}^{m}\left(  p\right)  &=\int_{0}^{p}\cdots\int_{0}^{t_{3}}\int_{0}^{t_{2}}%
{Q}_{j}\left(  t_{1}\right)  \mathrm{d}t_{1}\mathrm{d}t_{2}\cdots
\mathrm{d}t_{m-1}\notag\\
&=\frac{1}{(m-2)!}\int_{0}^{p}(p-t)^{m-2}Q_j(t)\mathrm{d}t, \quad p\in[0,1],
\end{align}
and 
\begin{align}
{\tilde{\Lambda}}_{j}^{m}\left(  p\right)  &=\int_{p}^{1}\cdots
\int_{t_{3}}^{1}\int_{0}^{t_{2}}{Q}_{j}\left(  t_{1}\right)
\mathrm{d}t_{1}\mathrm{d}t_{2}\cdots\mathrm{d}t_{m-1}\notag\\
&=\frac{1}{(m-2)!}\left[(1-p)^{m-2}\mu_j-\int_{p}^{1}(t-p)^{m-2}Q_j(t)\mathrm{d}t\right], \quad p\in[0,1].
\end{align}
With these functions, for $m\ge3$, we introduce the upward and downward inverse stochastic dominances in \citet{aaberge2021ranking}. 

\begin{definition}
A distribution $F_{1}$ $m$th-degree upward inverse stochastically
dominates a distribution $F_{2}$ if $\Lambda_{1}^{m}\left(  p\right)
\geq\Lambda_{2}^{m}\left(  p\right)  $ for all $p\in\left[  0,1\right]  $.
\end{definition}

\begin{definition}
A distribution $F_{1}$ $m$th-degree downward inverse
stochastically dominates a distribution $F_{2}$ if $\tilde{\Lambda}_{1}%
^{m}\left(  p\right)  \geq\tilde{\Lambda}_{2}^{m}\left(  p\right)  $ for all
$p\in\left[  0,1\right]  $.    
\end{definition}

The definitions above can be viewed as weak versions of those in \citet{aaberge2021ranking} since we do not need the inequalities to hold strictly for some point $p\in (0,1)$.
In this paper, we focus on $m\ge3$. When $m=2$, the upward inverse stochastic dominance is the generalized Lorenz dominance. Propositions 2.3 and 2.4 of \citet{aaberge2021ranking} provide results linking the $m$th-degree upward and downward inverse stochastic dominances to Gini social welfare functions and Lorenz social welfare functions. Also, Theorems 2.3 and 2.4 of \citet{aaberge2021ranking} show the relationship between the $m$th-degree upward and downward inverse stochastic dominances and the general family of welfare functions.  
Under the above types of dominances, the two sets of hypotheses of interest in this paper are
\begin{enumerate}[label=(\roman*)]
    \item Upward inverse stochastic dominance:
    \begin{align*}
\text{H}_{0}  &  :\Lambda_{2}^m(p)  \leq \Lambda_{1}^m(p)  \text{ for
	all }p\in[  0,1]  ,\\
\text{H}_{1}  &  :\Lambda_{2}^m(p)  > \Lambda_{1}^m(p)  \text{ for some
}p\in[  0,1]  .
\end{align*}

    \item Downward inverse stochastic dominance:
    \begin{align*}
\tilde{\mathrm{H}}_{0}  &  :\tilde{\Lambda}_{2}^m(p)  \leq \tilde{\Lambda}_{1}^m(p)  \text{ for
	all }p\in[  0,1]  ,\\
\tilde{\mathrm{H}}_{1}  &  :\tilde{\Lambda}_{2}^m(p)  > \tilde{\Lambda}_{1}^m(p)  \text{ for some
}p\in[  0,1]  .
\end{align*}
\end{enumerate}
Now, we define two difference functions by
\begin{align*}
    \phi_m^u(p)=\Lambda_{2}^m(p)  - \Lambda_{1}^m(p) \text{ and }\phi_m^d(p)=\tilde{\Lambda}_{2}^m(p)  - \tilde{\Lambda}_{1}^m(p), \quad p\in[0,1].
\end{align*}
These functions will be used to construct the tests in the following sections. 

\section{Test Formulation}

Following \citet{BDB14} and \citet{Beare2017improved}, we consider two sampling frameworks from $F_1$ and $F_2$, in either of which, for $j=1,2$, we draw an independently and identically distributed (i.i.d.) sample $\{X_i^j\}_{i=1}^{n_j}$ from $F_j$.

In the first sampling framework ({independent samples}), we suppose that the two samples are independent of each other. The sample sizes $n_1$ and $n_2$ may be different and may be treated as functions of an underlying index $n\in\mathbb N$. We assume that as $n\to\infty$,
\begin{align}\label{samplesizes}
\frac{n_1n_2}{n_1+n_2}\to\infty\quad\text{and}\quad\frac{n_1}{n_1+n_2}\to\lambda\in[0,1].
\end{align}

In the second sampling framework ({matched pairs}), we suppose that $n_1=n_2=n$ and the pairs $\{(X_i^1,X_i^2)\}_{i=1}^n$ are i.i.d. We allow the dependence between paired observations. We use $C$ to denote the bivariate copula function characterizing this dependence, and we suppose that $C$ has maximal correlation strictly less than one \citep[see, e.g.,][Definition 3.2]{beare2010copulas}.  Clearly, $\lambda=1/2$ in this framework.

We summarize these settings in the following assumption. 

\begin{assumption}\label{ass.data}\citep{BDB14,Beare2017improved} 
	The i.i.d.\ samples $\{X_i^1\}_{i=1}^{n_1}$ and $\{X_i^2\}_{i=1}^{n_2}$ satisfy one of the following conditions.
	\begin{enumerate}[label=(\roman*)]
		\item \emph{Independent samples}: The samples $\{X_i^1\}_{i=1}^{n_1}$ and $\{X_i^2\}_{i=1}^{n_2}$ are independent of each other. The sample sizes $n_1$ and $n_2$ satisfy \eqref{samplesizes}.
		\item \emph{Matched pairs}: The sample sizes $n_1$ and $n_2$ satisfy $n_1=n_2=n$ for some index $n$. The sample pairs $\{(X_i^1,X_i^2)\}_{i=1}^{n}$ are i.i.d., and the bivariate copula function $C$ has maximal correlation strictly less than one.
	\end{enumerate}
\end{assumption}

With the random samples, for $j=1,2$, define the empirical CDF
\begin{align*}
\hat{F}_j(x)=\frac{1}{n_j}\sum_{i=1}^{n_j}
{1}(X_i^j\leq x),\quad x\in[0,\infty),
\end{align*}
and the empirical quantile function
\begin{align*}
\hat{Q}_j(p)=\inf\{x\in[0,\infty):\hat{F}_j(x)\geq p\},\quad p\in[0,1].
\end{align*}
Following \citet{Beare2017improved}, we let $\mathcal{B}$ be a centered Gaussian random element in $C(\left[0,1\right]  ^{2})$ with the covariance kernel
\[
Cov\left(  \mathcal{B}\left(  u,v\right)  ,\mathcal{B(}u^{\prime},v^{\prime
})\right)  =C\left(  u\wedge u^{\prime},v\wedge v^{\prime}\right)  -C\left(
u,v\right)  C(u^{\prime},v^{\prime}),
\]
where $u,u',v,v'\in[0,1]$, and $C$ is the copula function in Assumption \ref{ass.data}. Under Assumption \ref{ass.data}(i), $C(u,v)=uv$. Under Assumption \ref{ass.data}(ii), $C$ is the unique copula function for the pair $(X_{i}^{1},X_{i}^{2})  $. Define $\mathcal{B}_{1}$ and $\mathcal{B}_{2}$ to be
the centered Gaussian random elements in $C(\left[  0,1\right]  )$ such that
$\mathcal{B}_{1}\left(  u\right)  =\mathcal{B}\left(  u,1\right)  $ and
$\mathcal{B}_{2}\left(  v\right)  =\mathcal{B}\left(  1,v\right)  $. As mentioned by \citet{Beare2017improved}, it is
straightforward to show that 
\[
\left(
\begin{array}
[c]{c}%
n_{1}^{1/2}\left(  \hat{F}_{1}-F_{1}\right)  \\
n_{2}^{1/2}\left(  \hat{F}_{2}-F_{2}\right)
\end{array}
\right)  \leadsto\left(
\begin{array}
[c]{c}%
\mathcal{B}_{1}\circ F_{1}\\
\mathcal{B}_{2}\circ F_{2}%
\end{array}
\right)
\]
in $\ell^{\infty}\left(  [0,\infty)\right)  \times\ell^{\infty}\left(
[0,\infty)\right)  $.
\citet{Beare2017improved} show that under Assumptions \ref{ass.distribution} and \ref{ass.data}, by applying the results of \citet{K17}, we have
\[
\left(
\begin{array}
[c]{c}%
n_{1}^{1/2}\left(  \hat{Q}_{1}-Q_{1}\right)  \\
n_{2}^{1/2}\left(  \hat{Q}_{2}-Q_{2}\right)
\end{array}
\right)  \leadsto\left(
\begin{array}
[c]{c}%
-Q_{1}^{\prime}\cdot\mathcal{B}_{1}\\
-Q_{2}^{\prime}\cdot\mathcal{B}_{2}%
\end{array}
\right)
\]
in $L^{1}\left(  [0,1]\right)  \times L^{1}\left(  [0,1]\right)  $. Recall that for all $p\in\left[  0,1\right]  $, $\Lambda_{j}^{2}\left(
p\right)  =\int_{0}^{p}Q_{j}\left(  t\right)  \mathrm{d}t$ and we estimate
$\Lambda_{j}^{2}\left(  p\right)  $ by $\hat{\Lambda}_{j}^{2}\left(  p\right)
=\int_{0}^{p}\hat{Q}_{j}\left(  t\right)  \mathrm{d}t$. Then by continuous
mapping theorem,
\[
\left(
\begin{array}
[c]{c}%
n_{1}^{1/2}\left(  \hat{\Lambda}_{1}^2-{\Lambda}_{1}^2\right)  \\
n_{2}^{1/2}\left(  \hat{\Lambda}_{2}^2-{\Lambda}_{2}^2\right)
\end{array}
\right)
\leadsto\binom{\mathcal{V}_{1}}{\mathcal{V}_{2}},
\]
where $\mathcal{V}_{j}(p)=\int_{0}^{p}-Q_{j}^{\prime}\left(  t\right)
\mathcal{B}_{j}(t)\mathrm{d}t$. By continuous mapping theorem again, it follows
that
\[
\sqrt{T_{n}}\left\{  (\hat{\Lambda}_{2}^{2}-\hat{\Lambda}_{1}^{2})-\left(
\Lambda_{2}^{2}-\Lambda_{1}^{2}\right)  \right\}  \leadsto\mathbb{G}_{\Lambda},
\]
where $T_n=n_1n_2/(n_1+n_2)$, $\mathbb{G}_{\Lambda}=\sqrt{\lambda
}\mathcal{V}_{2}-\sqrt{1-\lambda}\mathcal{V}_{1}$, and for all $p,p^{\prime}\in\left[  0,1\right]  $,
\begin{align*}
E\left[  \mathbb{G}_{\Lambda}\left(  p\right)  \mathbb{G}_{\Lambda}(
p^{\prime})  \right]    =&\,\left(  1-\lambda\right)  E\left[
\mathcal{V}_{1}\left(  p\right)  \mathcal{V}_{1}(  p^{\prime})
\right]  -\sqrt{\lambda\left(  1-\lambda\right)  }E\left[  \mathcal{V}%
_{1}\left(  p\right)  \mathcal{V}_{2}(  p^{\prime})  \right]  \\
& -\sqrt{\lambda\left(  1-\lambda\right)  }E\left[  \mathcal{V}_{2}\left(
p\right)  \mathcal{V}_{1}(  p^{\prime})  \right]  +\lambda E\left[
\mathcal{V}_{2}\left(  p\right)  \mathcal{V}_{2}(  p^{\prime})
\right]  .
\end{align*}

The following lemma provides a formula for computing $E[  \mathcal{V}_{j}\left(  p\right)  \mathcal{V}_{j^{\prime}}(
p^{\prime})  ]$ which is useful when we estimate $E[  \mathbb{G}_{\Lambda}\left(  p\right)  \mathbb{G}_{\Lambda}(
p^{\prime})  ]$.
\begin{lemma}\label{lemma.EVV}
For $j,j^{\prime}\in\left\{  1,2\right\}  $ and $p,p^{\prime}\in\left[
0,1\right]  $,
\[
E\left[  \mathcal{V}_{j}\left(  p\right)  \mathcal{V}_{j^{\prime}}(
p^{\prime})  \right]  =Cov\left(  Q_{j}\left(  p\right)  \wedge
X^{j},Q_{j^{\prime}}(  p^{\prime})  \wedge X^{j^{\prime}}\right)  .
\]    
\end{lemma}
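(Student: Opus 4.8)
The plan is to evaluate both sides of the claimed identity as one and the same double integral against the covariance kernel of the Gaussian limit. Write $K_{jj'}(s,t)=E[\mathcal{B}_j(s)\mathcal{B}_{j'}(t)]$. From the covariance kernel of $\mathcal{B}$ together with the copula margins $C(u,1)=C(1,u)=u$, one reads off directly that $K_{11}(s,t)=K_{22}(s,t)=s\wedge t-st$ and $K_{12}(s,t)=K_{21}(t,s)=C(s,t)-st$. Since $\mathcal{V}_j(p)=\int_0^p-Q_j'(s)\mathcal{B}_j(s)\,ds$, an application of Fubini's theorem (to be justified below) gives
\[
E\big[\mathcal{V}_j(p)\mathcal{V}_{j'}(p')\big]=\int_0^p\!\!\int_0^{p'}Q_j'(s)Q_{j'}'(t)\,K_{jj'}(s,t)\,ds\,dt .
\]
First I would record this identity.

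Next I would bring the right-hand side of the lemma into the same shape. Using the layer-cake identity $a\wedge x=\int_0^a 1(x>t)\,dt$, valid for $a,x\ge 0$, followed by the change of variables $t=Q_j(s)$ (legitimate since Assumption \ref{ass.distribution} makes $Q_j$ absolutely continuous and strictly increasing on $[0,1)$ with $Q_j(0)=0$), one obtains $Q_j(p)\wedge X^j=\int_0^p 1(X^j>Q_j(s))\,Q_j'(s)\,ds$. Because $F_j$ is continuous and strictly increasing on its support, $1(X^j>Q_j(s))=1(F_j(X^j)>s)=1(U^j>s)$ for a.e.\ $s$, almost surely, where $U^j:=F_j(X^j)$ is $\mathrm{Uniform}[0,1]$ and, in the matched-pairs framework, $(U^1,U^2)$ has joint c.d.f.\ $C$. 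Applying Fubini once more,
\[
\mathrm{Cov}\big(Q_j(p)\wedge X^j,\,Q_{j'}(p')\wedge X^{j'}\big)=\int_0^p\!\!\int_0^{p'}Q_j'(s)Q_{j'}'(t)\,\mathrm{Cov}\big(1(U^j>s),1(U^{j'}>t)\big)\,ds\,dt .
\]
Finally, $\mathrm{Cov}(1(U^j>s),1(U^{j'}>t))=P(U^j\le s,U^{j'}\le t)-st$, which is $s\wedge t-st$ when $j=j'$ and $C(s,t)-st$ when $j\neq j'$ (with $C(s,t)=st$ under independent samples). In every case this equals $K_{jj'}(s,t)$, so the two double integrals agree and the lemma follows.

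The only genuine obstacle I anticipate is the integrability bookkeeping behind the two uses of Fubini, since $Q_j'$ need not be bounded near the endpoints of $[0,1]$. I would control $\int_0^1|Q_j'(s)|\,E|\mathcal{B}_j(s)|\,ds$ via $E|\mathcal{B}_j(s)|\le\sqrt{s(1-s)}$, noting that $\int_0^1|Q_j'(s)|\sqrt{s(1-s)}\,ds<\infty$ follows from the $(2+\epsilon)$th-moment hypothesis in Assumption \ref{ass.distribution} (equivalently, this is already implicit in the stated $L^1([0,1])$ weak limit for $n_j^{1/2}(\hat Q_j-Q_j)$, which forces $-Q_j'\cdot\mathcal{B}_j\in L^1([0,1])$ a.s.). With that in hand, Tonelli's theorem justifies both interchanges; everything else — the layer-cake formula, the change of variables, and the copula evaluation of $\mathrm{Cov}(1(U^j>s),1(U^{j'}>t))$ — is elementary.
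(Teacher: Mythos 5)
Your argument is correct, and its skeleton matches the paper's: both proofs first use Fubini (justified by the $L^1$ limit / the bound $E|\mathcal{B}_j(s)|\le\sqrt{s(1-s)}$ together with $\int_0^1 Q_j'(s)\sqrt{s(1-s)}\,\mathrm{d}s<\infty$) to write $E[\mathcal{V}_j(p)\mathcal{V}_{j'}(p')]$ as a double integral of $Q_j'(s)Q_{j'}'(t)$ against the Brownian-bridge/copula kernel. Where you genuinely diverge is in how the right-hand side is brought to the same form: the paper defines $H_{j,p}(u)=-\int_0^{u\wedge p}Q_j'(t)\,\mathrm{d}t$ and invokes Theorem 3.1 of Lo (2017) (the Cuadras/Hoeffding-type covariance representation, see also Cuadras 2002 and Beare 2009) to equate $Cov(H_{j,p}(U),H_{j',p'}(V))$ with the kernel integral, then identifies $H_{j,p}(F_j(X^j))$ with $-(Q_j(p)\wedge X^j)$ up to an additive constant; you instead rederive exactly the special case of that covariance identity from scratch via the layer-cake formula $a\wedge x=\int_0^a 1(x>t)\,\mathrm{d}t$, the substitution $t=Q_j(s)$, and the elementary computation $Cov(1(U^j>s),1(U^{j'}>t))=C(s,t)-st$. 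Your route is self-contained and arguably more transparent, at the cost of carrying the integrability bookkeeping yourself (which you handle correctly); the paper's route is shorter by outsourcing the key identity to a citation. One small point to tighten: if the lower endpoint $a_j$ of the support of $F_j$ is strictly positive, $Q_j$ is not continuous at $0$ ($Q_j(0)=0$ but $Q_j(0+)=a_j$), so your change of variables actually yields $Q_j(p)\wedge X^j=a_j+\int_0^p 1(X^j>Q_j(s))\,Q_j'(s)\,\mathrm{d}s$ almost surely; since the discrepancy is a deterministic constant it drops out of the covariance (the paper's proof has the same implicit constant through $H_{j,p}\circ F_j$), so this is a phrasing fix rather than a gap.
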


For $j=1,2$ and $m\geq3$, we define the estimators for the functions $\Lambda_j^m$ and $\tilde\Lambda_j^m$ by
\begin{align*}
\hat{\Lambda}_{j}^{m}\left(  p\right)  =\int_{0}^{p}\cdots\int_{0}^{t_{3}}\int_{0}^{t_{2}}%
\hat{Q}_{j}\left(  t_{1}\right)  \mathrm{d}t_{1}\mathrm{d}t_{2}\cdots
\mathrm{d}t_{m-1} =\int_{0}^{p}\cdots\int_{0}^{t_{3}}\hat{\Lambda}_{j}^{2}\left(  t_{2}\right)  \mathrm{d}t_{2}\cdots
\mathrm{d}t_{m-1}
\end{align*}
and 
\begin{align*}
\hat{\tilde{\Lambda}}_{j}^{m}\left(  p\right)  =\int_{p}^{1}\cdots
\int_{t_{3}}^{1}\int_{0}^{t_{2}}\hat{Q}_{j}\left(  t_{1}\right)
\mathrm{d}t_{1}\mathrm{d}t_{2}\cdots\mathrm{d}t_{m-1}=\int_{p}^{1}\cdots\int_{t_{3}}^{1}\hat{\Lambda}_{j}^{2}\left(  t_{2}\right)
\mathrm{d}t_{2}\cdots\mathrm{d}t_{m-1}.
\end{align*}
Define the empirical difference functions
\[
\hat{\phi}_m^u\left(  p\right)  =\hat{\Lambda}_{2}^{m}\left(  p\right) -\hat{\Lambda}_{1}^{m}\left(  p\right) \text{ and } {\hat{\phi}}_m^d\left(  p\right)  =\hat{\tilde{\Lambda}}_{2}^{m}\left(  p\right) -\hat{\tilde{\Lambda}}_{1}^{m}\left(  p\right), \quad p\in[0,1].
\]
Then under Assumptions \ref{ass.distribution} and \ref{ass.data}, by continuous mapping theorem, we have that 
\begin{align}\label{eq.weak convergence phi^u}
    \sqrt{T_n}(\hat{\phi}_m^u-{\phi}_m^u)\leadsto \mathbb{G}_m^u \text{ with } \mathbb{G}_m^u(p)=
\int_{0}^{p}\cdots\int_{0}^{t_{3}}\mathbb{G}_{\Lambda}\left(  t_{2}\right)  \mathrm{d}t_{2}\cdots \mathrm{d}t_{m-1}
\end{align}
and 
\begin{align}\label{eq.weak convergence phi^d}
    \sqrt{T_n}(\hat{\phi}_m^d-{\phi}_m^d)\leadsto \mathbb{G}_m^d\text{ with } \mathbb{G}_m^d(p)=\int_{p}^{1}\cdots\int_{t_{3}}^{1}\mathbb{G}_{\Lambda} \left(  t_{2}\right)\mathrm{d}t_{2}\cdots\mathrm{d}t_{m-1}.
\end{align}
By Fubini's Theorem, we have that
\begin{align*}
Var(\mathbb{G}_m^u(p))=\int_{0}^{p}\cdots\int_{0}^{t_{3}^{\prime}}\left(  \int_{0}^{p}\cdots\int_{0}^{t_{3}}E\left[
\mathbb{G}_{\Lambda}(  t_{2})  \mathbb{G}_{\Lambda}(  t_{2}^{\prime
})  \right]  \mathrm{d}t_{2}\cdots\mathrm{d}t_{m-1}\right)  \mathrm{d}t_{2}^{\prime}\cdots\mathrm{d}t_{m-1}^{\prime}
\end{align*}
and 
\begin{align*}
Var(\mathbb{G}_m^d(p))=    \int_{p}^{1}\cdots\int_{t_{3}^{\prime}}^1\left(  \int_{p}^{1}\cdots\int_{t_{3}}^1E\left[
\mathbb{G}_{\Lambda}(  t_{2})  \mathbb{G}_{\Lambda}(  t_{2}^{\prime
})  \right]  \mathrm{d}t_{2}\cdots\mathrm{d}t_{m-1}\right)  \mathrm{d}t_{2}^{\prime}\cdots\mathrm{d}t_{m-1}^{\prime}
\end{align*}
for every $p\in[0,1]$.

We set the test statistics as $T_n^{1/2}\mathcal F(\hat{\phi}_m^w)$ for $w\in\{u,d\}$, where $\mathcal F:C([0,1])\to\mathbb R$ is some map that measures the size of the positive part of $\hat{\phi}_m^w$. The test statistics we consider are similar to those of \citet{BDB14} and \citet{Beare2017improved}. We assume that the map $\mathcal{F}$ possesses the following properties \citep{BDB14}.
\begin{assumption}
	\label{ass.functional}\citep{BDB14}
 The map $\mathcal{F}:C([0,1])\to\mathbb R$ satisfies that, for all $h\in C([0,1])$,
	\begin{enumerate}[label=(\roman*)]
		\item if $h(p)\le 0$ for all $p\in[0,1]$ and $h(p)=0$ for some $p\in[0,1]$, then $\mathcal{F}(h)=0$;
		\item if $h(p)>0$ for some $p\in(0,1)$, then $\mathcal{F}(h)>0$. 
	\end{enumerate}
\end{assumption}
By definition, we have that $\phi_m^u(0)=\phi_m^d(1)=0$. Therefore, under Assumption
\ref{ass.functional}, the null hypothesis ${\mathrm {H}}_{0}$ ($\tilde{\mathrm{H}}_{0}$) is true if and only if $\mathcal F(\phi_m^u)=0$ ($\mathcal F(\phi_m^d)=0$), while the alternative hypothesis $\mathrm H_1$ ($\tilde{\mathrm{H}}_{1}$) is true if and only if $\mathcal{F}(\phi_m^u)>0$ ($\mathcal{F}(\phi_m^d)>0$). 

Following \citet{BDB14} and \citet{Beare2017improved}, we focus on two choices of $\mathcal F$ denoted by $\mathcal S$ and $\mathcal I$: For every $h\in C([0,1])$, 
\begin{align}\label{defSI}
\mathcal{S}(h)  =\sup_{p\in[  0,1]
}h(  p)  \quad \text{and}\quad\mathcal{I}(h)  =\int_{0}^{1}\max\{h(  p),0\}\mathrm{d}p.
\end{align}
It can easily be verified that both $\mathcal S$ and $\mathcal I$ satisfy Assumption \ref{ass.functional} and both of them are Hadamard directionally differentiable as illustrated in the following. We exploit this property to obtain the asymptotics of the test statistics. Following \citet{fang2014inference}, we introduce the definition of Hadamard directional differentiability.\footnote{See more discussions and examples on Hadamard directional differentiability in \citet{shapiro1991asymptotic}, \citet{dumbgen1993nondifferentiable}, \citet{andrews2000inconsistency}, \citet{bickel2012resampling}, \citet{hirano2012impossibility}, \citet{beare2015nonparametric},
	\citet{Beare2016global}, \citet{hansen2017regression}, \citet{Seo2016tests},  \citet{Beare2015improved}, \citet{chen2019improved}, \citet{Beare2017improved}, and \citet{sun2018ivvalidity}.} 

\begin{definition}
	\label{def.Hadamard directional}
 \citep{fang2014inference}
 Let $\mathbb{D}$ and $\mathbb{E}$ be normed spaces. A map $\mathcal{G}:\mathbb{D}\to\mathbb{E}$ is said to be Hadamard
	directionally differentiable at $\phi\in\mathbb{D}$ tangentially to $\mathbb{D}_0\subset\mathbb{D}$ if there is a
continuous	map $\mathcal{G}_{\phi}^{\prime}:\mathbb{D}_0\rightarrow
	\mathbb{E}$ such that
	\begin{align}\label{Hadamard directional derivative}
		\lim_{n\rightarrow\infty}\left\Vert\frac{\mathcal{G}(\phi+t_{n}h_{n}%
			)-\mathcal{G}(\phi)}{t_{n}}-\mathcal{G}_{\phi}^{\prime}(h)\right\Vert
		_{\mathbb{E}}=0
	\end{align}
	for all sequences $\{h_{n}\}\subset\mathbb{D}$ and $\{t_{n}\}\subset
	\mathbb{R}_{+}$ such that $t_{n}\downarrow0$ and $h_{n}\rightarrow h\in\mathbb{D}_0$.
\end{definition}

We suppose that the map $\mathcal F$ is Hadamard directionally differentiable in the next assumption. 
\begin{assumption}\label{ass.Hadamard directional differentiability}
    The map $\mathcal{F}:C([0,1])\to \mathbb{R}$ is Hadamard directionally differentiable at $\phi_m^u$ and $\phi_m^d$ with the directional derivatives $\mathcal{F}'_{\phi_m^u}$ and $\mathcal{F}'_{\phi_m^d}$, respectively. 
\end{assumption}

For every $\phi\in C([0,1])$, we define the set
	\begin{align*}
	\Psi(\phi)=\argmax_{p\in[0,1]}\phi(p).
    \end{align*}
By Lemma S.4.9 of \citet{fang2014inference}, we have that the map $\mathcal S$ satisfies Assumption \ref{ass.Hadamard directional differentiability}, and for $w\in\{u,d\}$ and $m\ge 3$,
	\begin{align*}
	\mathcal S'_{\phi_m^w}(h)=\sup_{p\in\Psi(\phi_m^w)}h(p),\quad h\in C([0,1]).
	\end{align*}

For every $\phi\in C([0,1])$, define the sets
	\begin{align*}
	B_{0}(\phi)=\{p\in[0,1]:\phi(p)=0\}\quad\text{and}\quad B_{+}(\phi)=\{p\in[0,1]:\phi(p)>0\}.
	\end{align*}
By Lemma S.4.5 of \citet{fang2014inference}, the map $\mathcal I$ satisfies Assumption \ref{ass.Hadamard directional differentiability}, and for $w\in\{u,d\}$ and $m\ge 3$,
	\begin{align*}
	\mathcal I'_{\phi_m^w}(h)=\int_{B_{+}(  \phi_m^w)
	}h(  p)   \mathrm{d}p+\int_{B_{0}(
	\phi_m^w)  }\max\{  h(  p)  ,0\}    \mathrm{d}p,\quad h\in C([0,1]).
\end{align*}

The following lemma provides the asymptotic distributions of the test statistics by applying a more general version of the delta method for Hadamard directionally differentiable maps \citep{shapiro1991asymptotic,dumbgen1993nondifferentiable,fang2014inference}.
\begin{lemma}\label{lemma.weak convergence F}
	Under Assumptions \ref{ass.distribution}, \ref{ass.data}, and \ref{ass.Hadamard directional differentiability},  it follows that for $m\ge 3$ and $w\in\{u,d\}$,
	\begin{align*}
	T_n^{1/2}(\mathcal F(\hat\phi_m^w)-\mathcal F(\phi_m^w))\rightsquigarrow\mathcal F'_{\phi_m^w}(\mathbb{G}_m^w).
	\end{align*}
	Suppose, in addition, that Assumption \ref{ass.functional} holds. If $\mathrm{H}_0$ and $\tilde{\mathrm{H}}_0$ are true, then
	\begin{align}\label{eq.F asymptotic limit}
	T_n^{1/2}\mathcal F(\hat\phi_m^w)\rightsquigarrow\mathcal F'_{\phi_m^w}(\mathbb{G}_m^w),
	\end{align}
	whereas if $\mathrm{H}_1$ and $\tilde{\mathrm{H}}_1$ are true then $T_n^{1/2}\mathcal F(\hat\phi_m^w)\to\infty$ in probability.
\end{lemma}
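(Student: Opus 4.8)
The plan is to derive the first weak convergence statement directly from the delta method for Hadamard directionally differentiable maps, and then obtain the remaining conclusions by combining that with Assumption \ref{ass.functional}. First I would recall the two weak convergence results established in the excerpt, namely $\sqrt{T_n}(\hat\phi_m^u-\phi_m^u)\leadsto\mathbb G_m^u$ in $C([0,1])$ (and analogously for $w=d$); these hold under Assumptions \ref{ass.distribution} and \ref{ass.data}, and the limit processes $\mathbb G_m^w$ are Gaussian random elements of $C([0,1])$, hence take values in a separable subset. Under Assumption \ref{ass.Hadamard directional differentiability}, the map $\mathcal F$ is Hadamard directionally differentiable at $\phi_m^w$ tangentially to $C([0,1])$ with continuous derivative $\mathcal F'_{\phi_m^w}$. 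Applying the generalized delta method of \citet[Theorem 2.1]{fang2014inference} (see also \citet{shapiro1991asymptotic,dumbgen1993nondifferentiable}) to the sequence $\sqrt{T_n}(\hat\phi_m^w-\phi_m^w)$ then yields
\[
T_n^{1/2}\bigl(\mathcal F(\hat\phi_m^w)-\mathcal F(\phi_m^w)\bigr)\rightsquigarrow\mathcal F'_{\phi_m^w}(\mathbb G_m^w),
\]
which is the first display.

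Next I would turn to the behavior under the null. By the remark preceding Assumption \ref{ass.Hadamard directional differentiability}, under Assumption \ref{ass.functional} the hypothesis $\mathrm H_0$ (resp. $\tilde{\mathrm H}_0$) holds if and only if $\mathcal F(\phi_m^u)=0$ (resp. $\mathcal F(\phi_m^d)=0$); recall that this equivalence uses $\phi_m^u(0)=\phi_m^d(1)=0$ together with both parts of Assumption \ref{ass.functional}. Substituting $\mathcal F(\phi_m^w)=0$ into the previous display immediately gives
\[
T_n^{1/2}\mathcal F(\hat\phi_m^w)\rightsquigarrow\mathcal F'_{\phi_m^w}(\mathbb G_m^w),
\]
which is \eqref{eq.F asymptotic limit}.

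Finally, for the alternative, suppose $\mathrm H_1$ (resp. $\tilde{\mathrm H}_1$) holds. Then again by Assumption \ref{ass.functional}, $\mathcal F(\phi_m^w)=c>0$ for some constant $c$. Write $T_n^{1/2}\mathcal F(\hat\phi_m^w)=T_n^{1/2}c+T_n^{1/2}(\mathcal F(\hat\phi_m^w)-\mathcal F(\phi_m^w))$; the second term converges weakly to a tight limit and hence is $O_p(1)$, while $T_n^{1/2}c\to\infty$ because $T_n=n_1n_2/(n_1+n_2)\to\infty$ under Assumption \ref{ass.data}. Therefore $T_n^{1/2}\mathcal F(\hat\phi_m^w)\to\infty$ in probability, completing the proof. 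The only step requiring care is the verification that the delta method applies, i.e. that the tangential-differentiability hypothesis of \citet{fang2014inference} is met with the tangent set taken to be all of $C([0,1])$ and that $\sqrt{T_n}(\hat\phi_m^w-\phi_m^w)$ converges to a limit concentrated on a separable set; both are guaranteed by Assumption \ref{ass.Hadamard directional differentiability} and the Gaussianity of $\mathbb G_m^w$ in $C([0,1])$, so no genuine obstacle arises.
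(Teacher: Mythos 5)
Your proposal is correct and follows essentially the same route as the paper: the paper's proof simply invokes the weak convergences \eqref{eq.weak convergence phi^u}--\eqref{eq.weak convergence phi^d}, Assumption \ref{ass.Hadamard directional differentiability}, and Theorem 2.1 of \citet{fang2014inference}, which is exactly your delta-method step. Your additional spelled-out arguments for the null (substituting $\mathcal F(\phi_m^w)=0$) and the alternative (the decomposition $T_n^{1/2}\mathcal F(\phi_m^w)+O_p(1)\to\infty$) are just the details the paper leaves implicit, and they are sound.
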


\subsection{Bootstrap Approximation}
We construct the bootstrap approximations following the method of  \citet{BDB14} and \citet{Beare2017improved}.

First, using random weights (specified below) $W^1_{n_1}=(W^1_{1,n_1},\ldots,W^1_{n_1,n_1})$ and $W^2_{n_2}=(W^2_{1,n_2},\ldots,W^2_{n_2,n_2})$, we construct the bootstrap versions of $\hat F_1$ and $\hat F_2$ by
\begin{align*}
\hat F^\ast_j(x)=\frac{1}{n_j}\sum_{i=1}^{n_j}W^j_{i,n_j}{1}(X_i^j\leq x),\quad x\in[0,\infty).
\end{align*}
For independent samples, the random weights $W^1_{n_1}=(W^1_{1,n_1},\ldots,W^1_{n_1,n_1})$ and $W^2_{n_2}=(W^2_{1,n_2},\ldots,W^2_{n_2,n_2})$ are drawn independently of the data and of one another from the multinomial distributions $M(n_1,p_{11},\ldots,p_{1n_1})$ and $M(n_2,p_{21},\ldots,p_{2n_2})$, respectively, where the probabilities over the categories $\{1,\ldots,n_1\}$ and $\{1,\ldots,n_2\}$ satisfy $p_{11}=\cdots=p_{1n_1}=1/n_1$ and $p_{21}=\cdots=p_{2n_2}=1/n_2$.
For matched pairs, we set $W^1_n=W^2_n$, and draw this vector independently of the data from the multinomial distribution with equal probabilities over the categories $\{1,\ldots,n\}$. 


With $\hat{F}_j^*$, we then define bootstrap empirical quantile functions
\begin{align*}
\hat Q^\ast_j(p)=\inf\{x\in[0,\infty):\hat F^\ast_j(x)\geq p\},\quad p\in[0,1].
\end{align*}
For $j=1,2$, $m\geq3$, and $p\in[0,1]$, define
\begin{align*}
\hat{\Lambda}_{j}^{m*}\left(  p\right)  =\int_{0}^{p}\cdots\int_{0}^{t_{3}}\int_{0}^{t_{2}}%
\hat{Q}_{j}^*\left(  t_{1}\right)  \mathrm{d}t_{1}\mathrm{d}t_{2}\cdots
\mathrm{d}t_{m-1}
\end{align*}
and 
\begin{align*}
\hat{\tilde{\Lambda}}_{j}^{m*}\left(  p\right)  =\int_{p}^{1}\cdots
\int_{t_{3}}^{1}\int_{0}^{t_{2}}\hat{Q}_{j}^*\left(  t_{1}\right)
\mathrm{d}t_{1}\mathrm{d}t_{2}\cdots\mathrm{d}t_{m-1}.
\end{align*}
Define the bootstrap difference functions
\[
\hat{\phi}_m^{u*}\left(  p\right)  =\hat{\Lambda}_{2}^{m*}\left(  p\right) -\hat{\Lambda}_{1}^{m*}\left(  p\right) \text{ and } {\hat{\phi}}_m^{d*}\left(  p\right)  =\hat{\tilde{\Lambda}}_{2}^{m*}\left(  p\right) -\hat{\tilde{\Lambda}}_{1}^{m*}\left(  p\right), \quad p\in[0,1].
\]

The directional derivative $\mathcal F'_{\phi_m^w}$ depends on the underlying DGP and therefore is unknown. We provide some estimator $\widehat{\mathcal F'_{\phi_m^w}}$ to consistently estimate $\mathcal F'_{\phi_m^w}$ and use it to construct the critical value. In the following, we provide estimators $\widehat{\mathcal S'_{\phi_m^w}}$ and $\widehat{\mathcal I'_{\phi_m^w}}$ for the derivatives of $\mathcal S$ and $\mathcal I$, respectively. 

Let $\hat{\lambda}=n_1/(n_1+n_2)$. For $j,j^{\prime}\in\left\{  1,2\right\}  $ and $p,p^{\prime}\in\left[
0,1\right]  $, we estimate $E[  \mathcal{V}_{j}\left(  p\right)
\mathcal{V}_{j^{\prime}}\left(  p^{\prime}\right)  ]  $ by
\[
\hat{E}\left[  \mathcal{V}_{j}\left(  p\right)  \mathcal{V}_{j^{\prime}%
}(p^{\prime})\right]  =\widehat{Cov}\left(  Q_{j}\left(  p\right)  \wedge
X^{j},Q_{j^{\prime}}(p^{\prime})\wedge X^{j^{\prime}}\right)  ,
\]
where $\widehat{Cov}(  Q_{j}\left(  p\right)  \wedge X^{j},Q_{j^{\prime}%
}\left(  p^{\prime}\right)  \wedge X^{j^{\prime}})  $ is the sample
covariance of the two samples
\begin{align*}
\left\{\hat{Q}_j(p)\wedge X^j_i\right\}_{i=1}^{n_j} \text{ and } \left\{\hat{Q}_{j'}(p')\wedge X^{j'}_i\right\}_{i=1}^{n_{j'}}.
\end{align*}
For independent samples, we estimate $E\left[  \mathbb{G}_{\Lambda}\left(
p\right)  \mathbb{G}_{\Lambda}\left(  p^{\prime}\right)  \right]  $ by
\[
\hat{E}\left[  \mathbb{G}_{\Lambda}\left(  p\right)  \mathbb{G}_{\Lambda
}(p^{\prime})\right]  =(  1-\hat{\lambda})  \hat{E}\left[
\mathcal{V}_{1}\left(  p\right)  \mathcal{V}_{1}(p^{\prime})\right]
+\hat{\lambda}\hat{E}\left[  \mathcal{V}_{2}\left(  p\right)  \mathcal{V}%
_{2}(p^{\prime})\right]  .
\]
For matched pairs,
\begin{align*}
\hat{E}\left[  \mathbb{G}_{\Lambda}\left(  p\right)  \mathbb{G}_{\Lambda
}(p^{\prime})\right]    =&\,(  1-\hat{\lambda})  \hat{E}\left[
\mathcal{V}_{1}\left(  p\right)  \mathcal{V}_{1}(p^{\prime})\right]
-\sqrt{\hat{\lambda}(  1-\hat{\lambda})  }\hat{E}\left[
\mathcal{V}_{1}\left(  p\right)  \mathcal{V}_{2}(p^{\prime})\right]  \\
& -\sqrt{\hat{\lambda}(  1-\hat{\lambda})  }\hat{E}\left[
\mathcal{V}_{2}\left(  p\right)  \mathcal{V}_{1}(p^{\prime})\right]
+\hat{\lambda}\hat{E}\left[  \mathcal{V}_{2}\left(  p\right)  \mathcal{V}%
_{2}(p^{\prime})\right]  .
\end{align*}
We then estimate the variances $Var(\mathbb{G}_m^u(p))$ and $Var(\mathbb{G}_m^d(p))$ by
\begin{align*}
    \hat{\sigma}_m^{u2}(p)=\int_{0}^{p}\cdots\int_{0}^{t_{3}^{\prime}}\left(  \int_{0}^{p}\cdots\int_{0}^{t_{3}}\hat{E}\left[
\mathbb{G}_{\Lambda}(  t_{2})  \mathbb{G}_{\Lambda}(  t_{2}^{\prime
})  \right]  \mathrm{d}t_{2}\cdots\mathrm{d}t_{m-1}\right)  \mathrm{d}t_{2}^{\prime}\cdots\mathrm{d}t_{m-1}^{\prime}
\end{align*}
and 
\begin{align*}
    \hat{\sigma}_m^{d2}(p)=\int_{p}^{1}\cdots\int_{t_{3}^{\prime}}^1\left(  \int_{p}^{1}\cdots\int_{t_{3}}^1\hat{E}\left[
\mathbb{G}_{\Lambda}(  t_{2})  \mathbb{G}_{\Lambda}(  t_{2}^{\prime
})  \right]  \mathrm{d}t_{2}\cdots\mathrm{d}t_{m-1}\right)  \mathrm{d}t_{2}^{\prime}\cdots\mathrm{d}t_{m-1}^{\prime},
\end{align*}
respectively, for $p\in[0,1]$.

For $w\in\{u,d\}$, let $\hat{v}_m^w=\max\{\hat{\sigma}_m^{w2}, \xi\}^{1/2}$ for some small positive number $\xi$. We suggest using $\xi=0.001$ in practice. Here, the trimming parameter $\xi$ bounds the estimator $\hat{\sigma}_m^{w2}$ away from zero. Similar bounded estimators are used by \citet{Beare2015improved}, \citet{Beare2017improved}, and \citet{sun2018ivvalidity} to estimate contact sets in different contexts. Define the estimated contact set
\begin{align}\label{estcs}
\widehat{B_0(\phi_m^w)}=\left\{  p\in[  0,1]  :\big| T_n^{1/2}\hat{\phi}_m^w(  p)
\big| \leq\tau_{n}\hat{v}_m^w(p)\right\},
\end{align}
where $\tau_n$ is some tuning parameter. Here, we are using pointwise confidence intervals to estimate the contact set as in \citet{Beare2017improved}: Each point $p\in[0,1]$ is included in the estimated contact set if $T_n^{1/2}\hat{\phi}_m^w(p)$ is less than or equal to $\tau_n$ estimated standard deviations from zero.
We then estimate $\mathcal S'_{\phi_m^w}$ and $\mathcal I'_{\phi_m^w}$ by
\begin{align*}
\widehat{\mathcal S'_{\phi_m^w}}(h)=\sup_{p\in\widehat{B_0(\phi_m^w)}}h(p) \text{ and } \widehat{\mathcal I'_{\phi_m^w}}(h)=\int_{\widehat{B_{0}(
		\phi_m^w)}  }\max\{  h(  p)  ,0\}    \mathrm{d}p,\quad h\in C([0,1]).
\end{align*}
We use $\widehat{B_0(\phi_m^w)}$ to estimate both of the sets $B_0(\phi_m^w)$ and $\Psi(\phi_m^w)$. By definition, the curves $\Lambda_1^m$ and $\Lambda_2^m$ ($\tilde\Lambda_1^m$ and $\tilde\Lambda_2^m$) always touch at zero (one). If the null hypothesis is true then the maximum value of $\phi_m^u$ ($\phi_m^d$) is zero, and thus $\Psi(\phi_m^w)=B_0(\phi_m^w)$. The set $B_+(\phi_m^w)$ is always empty under the null, so we do not estimate it.

\subsection{Asymptotic Properties of the Test}\label{secbsasym}

The following lemma provides the asymptotic limit of the bootstrap process $T_n^{1/2}(\hat\phi_m^{w\ast}-\hat\phi_m^w)$ conditional on the data which consistently approximates the asymptotic limit of $T_n^{1/2}(\hat\phi_m^w-\phi_m^w)$. The result follows from the delta method for the bootstrap in \citet{K17}.
\begin{lemma}\label{lemma.bootstrap phi asymptotic distribution}
	Under Assumptions \ref{ass.distribution} and \ref{ass.data}, we have
	\begin{equation}
	T_{n}^{1/2}(  \hat{\phi}_m^{w\ast}-\hat\phi_m^w)\overset{\mathbb
 {P}}{\leadsto}\mathbb{G}_m^w\text{ in }C([0,1]).
	\end{equation}
\end{lemma}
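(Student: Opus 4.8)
The plan is to proceed in two stages: first establish the bootstrap analogue of the weak convergence for the empirical quantile processes, and then push this through the linear integral maps by the delta method for the bootstrap. For the first stage, I would recall that under Assumption \ref{ass.data} the classical exchangeable bootstrap for the empirical CDFs is valid, so that conditional on the data,
\begin{align*}
\left(
\begin{array}[c]{c}
n_{1}^{1/2}(  \hat{F}_{1}^{\ast}-\hat{F}_{1})  \\
n_{2}^{1/2}(  \hat{F}_{2}^{\ast}-\hat{F}_{2})
\end{array}
\right)  \overset{\mathbb{P}}{\leadsto}\left(
\begin{array}[c]{c}
\mathcal{B}_{1}\circ F_{1}\\
\mathcal{B}_{2}\circ F_{2}
\end{array}
\right)
\end{align*}
in $\ell^{\infty}([0,\infty))\times\ell^{\infty}([0,\infty))$; for matched pairs this uses the same multinomial weight vector for both coordinates so that the limiting covariance is governed by the copula $C$, exactly as in the unconditional statement quoted from \citet{Beare2017improved}. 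This is the bootstrap counterpart of Donsker's theorem and can be cited from \citet{kosorok2008introduction} or from the arguments in \citet{BDB14}.

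The second stage is to lift this to the quantile and then to the $\Lambda$ functions. Here I would invoke the Hadamard differentiability of the inverse map (the results of \citet{K17} as applied in \citet{Beare2017improved}), which gives, via the functional delta method for the bootstrap (\citet[Theorem~2.9]{K17}; see also \citet[Theorem~3.9.11]{van1996weak}), that conditional on the data
\begin{align*}
\left(
\begin{array}[c]{c}
n_{1}^{1/2}(  \hat{Q}_{1}^{\ast}-\hat{Q}_{1})  \\
n_{2}^{1/2}(  \hat{Q}_{2}^{\ast}-\hat{Q}_{2})
\end{array}
\right)  \overset{\mathbb{P}}{\leadsto}\left(
\begin{array}[c]{c}
-Q_{1}^{\prime}\cdot\mathcal{B}_{1}\\
-Q_{2}^{\prime}\cdot\mathcal{B}_{2}
\end{array}
\right)
\end{align*}
in $L^{1}([0,1])\times L^{1}([0,1])$. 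Next, the map sending $(\hat Q_1^\ast,\hat Q_2^\ast)$ to $\hat\phi_m^{w\ast}$ is a composition of the difference $\Lambda_2^2-\Lambda_1^2$ type operation with $m-2$ iterated integrations; each of these is a bounded linear map from $L^1([0,1])$ into $C([0,1])$ (by Fubini and the elementary bound $\sup_p |\int_0^p g| \le \|g\|_1$), hence continuous, so the continuous mapping theorem for the conditional-on-the-data weak convergence applies and yields $T_n^{1/2}(\hat\phi_m^{w\ast}-\hat\phi_m^w)\overset{\mathbb P}{\leadsto}\mathbb G_m^w$ in $C([0,1])$, with $\mathbb G_m^w$ as defined in \eqref{eq.weak convergence phi^u} and \eqref{eq.weak convergence phi^d}. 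The normalization $T_n^{1/2}$ versus $n_j^{1/2}$ is handled exactly as in the unconditional derivation preceding Lemma \ref{lemma.EVV}, writing $T_n^{1/2}=\sqrt{\hat\lambda}\,n_2^{1/2}$ and $T_n^{1/2}=\sqrt{1-\hat\lambda}\,n_1^{1/2}$ and using $\hat\lambda\to\lambda$.

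The main obstacle is making the conditional-on-the-data weak convergence compositions rigorous — in particular, verifying that the delta method for the bootstrap of \citet{K17} applies through the inverse map, which requires the bootstrap empirical CDF process to converge conditionally in the outer-almost-sure (or in-probability) sense to a \emph{tight, Borel-measurable} limit concentrated on the set where the chain-rule Hadamard derivative behaves well, and then checking that the subsequent linear integral maps are genuinely continuous (not merely Hadamard differentiable) on the relevant subspace so that the continuous mapping theorem for $\overset{\mathbb P}{\leadsto}$ is legitimate. Since all these maps are linear and bounded, and since the argument is structurally identical to the one already carried out for the unconditional limit in the paragraphs leading up to \eqref{eq.weak convergence phi^u} and \eqref{eq.weak convergence phi^d}, I expect this to go through with the bootstrap versions of each cited theorem substituted for their unconditional analogues; the proof will therefore largely consist of citing \citet{K17} for the bootstrap inverse map, \citet{BDB14} and \citet{Beare2017improved} for the bootstrap CDF convergence under both sampling schemes, and the continuous mapping theorem for the remaining linear operations.
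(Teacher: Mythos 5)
Your proposal is correct and follows essentially the same route as the paper: the paper likewise obtains the conditional-on-the-data convergence of $n_j^{1/2}(\hat Q_j^\ast-\hat Q_j)$ in $L^1([0,1])$ (citing Lemma 5.2 of \citet{Beare2017improved}, Theorem 2.9 of \citet{kosorok2008introduction}, and Lemma 7.4 of \citet{K17}, which package your bootstrap-CDF plus delta-method steps) and then concludes via a conditional continuous mapping theorem (Proposition 10.7 of \citet{kosorok2008introduction}) applied to the bounded linear integration maps, exactly as you outline.
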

To ensure that the conditional law of $\widehat{\mathcal F'_{\phi_m^w}}(T_n^{1/2}(\hat\phi_m^{w\ast}-\hat\phi_m^w))$ consistently approximates the distribution of $\mathcal F'_{\phi_m^w}(\mathbb{G}_m^w)$ with the convergence in Lemma \ref{lemma.bootstrap phi asymptotic distribution}, we introduce the following assumption from \citet{fang2014inference} on $\widehat{\mathcal F'_{\phi_m^w}}$.
\begin{assumption}\label{ass.FS}\citep{fang2014inference}
	The estimated map $\widehat{\mathcal F'_{\phi_m^w}}:C([0,1])\to\mathbb R$ satisfies that for every compact $K\subset C([0,1])$ and every $\epsilon>0$,
	\begin{align*}
	\mathbb{P}\left(\sup_{h\in K}\vert\widehat{\mathcal F'_{\phi_m^w}}(h)-\mathcal F'_{\phi_m^w}(h)\vert>\epsilon\right)\to0.
	\end{align*}
\end{assumption}
The following lemma shows that the proposed estimators $\widehat{\mathcal S'_{\phi_m^w}}$ and $\widehat{\mathcal I'_{\phi_m^w}}$ both satisfy Assumption \ref{ass.FS} based on the discussion in \citet{fang2014inference}.
\begin{lemma}\label{lemma.SI}
	Suppose that Assumptions \ref{ass.distribution} and \ref{ass.data} hold with $\tau_n\to\infty$ and $T_n^{-1/2}\tau_n\to0$ as $n\to\infty$. Then the estimators $\widehat{\mathcal S'_{\phi_m^w}}$ and $\widehat{\mathcal I'_{\phi_m^w}}$ satisfy Assumption \ref{ass.FS}.
\end{lemma}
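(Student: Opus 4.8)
The plan is to reduce the statement to a two-sided approximation of the estimated contact set $\widehat{B_0(\phi_m^w)}$ in \eqref{estcs}, and then to push this set approximation through the two functional estimators. Throughout I would work under $\mathrm H_0$ (respectively $\tilde{\mathrm H}_0$), which is the regime relevant for the size-control argument that will invoke this lemma; there $\phi_m^w\le 0$ on $[0,1]$, $\phi_m^w$ is continuous, $B_+(\phi_m^w)=\emptyset$, and $\sup_{p\in[0,1]}\phi_m^w(p)=0$ is attained (since $\phi_m^u(0)=\phi_m^d(1)=0$), so that $\Psi(\phi_m^w)=B_0(\phi_m^w)$ and consequently $\mathcal S'_{\phi_m^w}(h)=\sup_{p\in B_0(\phi_m^w)}h(p)$ and $\mathcal I'_{\phi_m^w}(h)=\int_{B_0(\phi_m^w)}\max\{h(p),0\}\,\mathrm dp$.

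First I would establish that, for every $\delta>0$, writing $B_0^\delta$ for the closed $\delta$-enlargement of $B_0(\phi_m^w)$ in $[0,1]$,
\[
\mathbb P\bigl(B_0(\phi_m^w)\subseteq\widehat{B_0(\phi_m^w)}\subseteq B_0^\delta\bigr)\to 1.
\]
The two ingredients are a uniform rate for $\hat\phi_m^w$ and two-sided control of $\hat v_m^w$. For the former, the weak convergence \eqref{eq.weak convergence phi^u}--\eqref{eq.weak convergence phi^d} in $C([0,1])$ together with continuity of $\|\cdot\|_\infty$ gives $T_n^{1/2}\|\hat\phi_m^w-\phi_m^w\|_\infty=O_{\mathbb P}(1)$. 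For the latter, the trimming forces $\hat v_m^w(p)\ge\xi^{1/2}$ everywhere, while, by Lemma \ref{lemma.EVV}, the Cauchy--Schwarz inequality for sample covariances, and $0\le\hat Q_j(p)\wedge X_i^j\le X_i^j$, one has $\sup_{p,p'}|\hat E[\mathbb G_\Lambda(p)\mathbb G_\Lambda(p')]|\le C\max_{j\in\{1,2\}} n_j^{-1}\sum_{i=1}^{n_j}(X_i^j)^2=O_{\mathbb P}(1)$ under Assumption \ref{ass.distribution}, and since $\hat\sigma_m^{w2}(p)$ integrates this kernel over a subregion of $[0,1]^{2(m-2)}$, $\sup_p\hat v_m^w(p)=O_{\mathbb P}(1)$. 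On $B_0(\phi_m^w)$ we have $\phi_m^w=0$, hence $|T_n^{1/2}\hat\phi_m^w(p)|\le T_n^{1/2}\|\hat\phi_m^w-\phi_m^w\|_\infty=O_{\mathbb P}(1)$ uniformly, whereas $\tau_n\hat v_m^w(p)\ge\tau_n\xi^{1/2}\to\infty$; so the inner inclusion holds with probability tending to one. On the compact set $[0,1]\setminus B_0^\delta$, continuity of $\phi_m^w$ and $B_+(\phi_m^w)=\emptyset$ give $\max_{p\in[0,1]\setminus B_0^\delta}\phi_m^w(p)=-c_\delta<0$, so $|T_n^{1/2}\hat\phi_m^w(p)|\ge c_\delta T_n^{1/2}-O_{\mathbb P}(1)$ uniformly, while $\tau_n\hat v_m^w(p)=O_{\mathbb P}(\tau_n)$; since $T_n^{-1/2}\tau_n\to0$, the outer inclusion holds with probability tending to one.

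Next I would transfer this to the functionals. Fix a compact $K\subset C([0,1])$ and $\epsilon>0$. By the Arzel\`a--Ascoli theorem, $K$ is uniformly bounded by some $M_K$ and uniformly equicontinuous with a common modulus $\omega_K$, $\omega_K(\delta)\downarrow0$ as $\delta\downarrow0$; also $\mathrm{Leb}(B_0^\delta\setminus B_0(\phi_m^w))\downarrow0$ as $\delta\downarrow0$ because $B_0(\phi_m^w)$ is closed. Pick $\delta$ so small that $\omega_K(\delta)<\epsilon$ and $M_K\,\mathrm{Leb}(B_0^\delta\setminus B_0(\phi_m^w))<\epsilon$. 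On the event $B_0(\phi_m^w)\subseteq\widehat{B_0(\phi_m^w)}\subseteq B_0^\delta$ (on which $\widehat{B_0(\phi_m^w)}\ne\emptyset$), for every $h\in K$ the monotonicity of the supremum and the equicontinuity bound give $0\le\widehat{\mathcal S'_{\phi_m^w}}(h)-\mathcal S'_{\phi_m^w}(h)\le\sup_{p\in B_0^\delta}h(p)-\sup_{p\in B_0(\phi_m^w)}h(p)\le\omega_K(\delta)<\epsilon$, and, since the two integrands agree off $\widehat{B_0(\phi_m^w)}\setminus B_0(\phi_m^w)\subseteq B_0^\delta\setminus B_0(\phi_m^w)$, $|\widehat{\mathcal I'_{\phi_m^w}}(h)-\mathcal I'_{\phi_m^w}(h)|\le\|h\|_\infty\,\mathrm{Leb}(B_0^\delta\setminus B_0(\phi_m^w))\le M_K\,\mathrm{Leb}(B_0^\delta\setminus B_0(\phi_m^w))<\epsilon$. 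Taking suprema over $h\in K$ and using that this event has probability tending to one yields $\mathbb P(\sup_{h\in K}|\widehat{\mathcal F'_{\phi_m^w}}(h)-\mathcal F'_{\phi_m^w}(h)|>\epsilon)\to0$ for $\mathcal F\in\{\mathcal S,\mathcal I\}$, which is Assumption \ref{ass.FS}.

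The main obstacle is the outer inclusion $\widehat{B_0(\phi_m^w)}\subseteq B_0^\delta$ in the first step: this is precisely where both rate conditions on $\tau_n$ are needed ($\tau_n\to\infty$ to retain the true contact set, $T_n^{-1/2}\tau_n\to0$ to expel points $\delta$-away from it), and it also relies on the uniform-in-$p$ upper bound on $\hat v_m^w$ and on the strict negative separation of $\phi_m^w$ from $0$ off $B_0(\phi_m^w)$, which uses $B_+(\phi_m^w)=\emptyset$ under the null. The remainder—the equicontinuity and continuity-of-Lebesgue-measure passage from set convergence to functional convergence—is routine and mirrors the treatment of $\mathcal S$ and $\mathcal I$ in \citet{fang2014inference}.
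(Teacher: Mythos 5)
Your proof is correct and rests on the same core mechanism as the paper's: consistency of the estimated contact set $\widehat{B_0(\phi_m^w)}$ under $\tau_n\to\infty$ and $T_n^{-1/2}\tau_n\to0$, followed by a transfer to the functionals $\mathcal S$ and $\mathcal I$; but the execution differs in three ways, all legitimate. First, you prove the two-sided inclusion $B_0(\phi_m^w)\subseteq\widehat{B_0(\phi_m^w)}\subseteq B_0^\delta$ with probability tending to one, which simultaneously controls the Hausdorff-type distance (needed for $\mathcal S$) and the Lebesgue measure of $\widehat{B_0(\phi_m^w)}\setminus B_0(\phi_m^w)$ (needed for $\mathcal I$); the paper instead shows $d(\widehat{B_0(\phi_m^w)},B_0(\phi_m^w))\rightarrow_p0$ for the $\mathcal S$ part and runs a separate almost-sure monotone-sets argument to get $\mu(\widehat{B_0(\phi_m^w)}\setminus B_0(\phi_m^w))\rightarrow0$ for the $\mathcal I$ part. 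Second, you obtain uniformity over compact $K\subset C([0,1])$ directly from Arzel\`a--Ascoli (uniform boundedness plus equicontinuity), whereas the paper proves convergence for each fixed $h$ and then upgrades to uniformity via Lipschitz continuity of the estimators and Lemma S.3.6 of \citet{fang2014inference}; your route is more self-contained, the paper's delegates the uniformity step to a ready-made result. Third, you supply an explicit $O_{\mathbb P}(1)$ bound for $\sup_p\hat v_m^w(p)$ via Cauchy--Schwarz and sample second moments, where the paper cites the argument of Lemma A.2 of \citet{Beare2017improved}. Two minor remarks: the set $[0,1]\setminus B_0^\delta$ you use for the outer inclusion is open, not compact, so the separation constant $c_\delta$ should be taken on the closed set $\{p\in[0,1]:e(p,B_0(\phi_m^w))\ge\delta\}$ (the paper's $D_\varepsilon$), which contains it --- a one-line repair; and your explicit restriction to the null matches the paper's de facto scope, since its own proof also uses $\Psi(\phi_m^w)=B_0(\phi_m^w)$ and $B_+(\phi_m^w)=\varnothing$ (stated explicitly only in the $\mathcal I$ step), so in both treatments the lemma is effectively established under $\mathrm H_0$ ($\tilde{\mathrm H}_0$), which is where it is invoked for size control.
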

By Theorem 3.2 of \citet{fang2014inference}, the next lemma shows that the distribution of the bootstrap statistic $\widehat{\mathcal F'_{\phi_m^w}}(T_n^{1/2}(\hat\phi_m^{w\ast}-\hat\phi_m^w))$ conditional on the data consistently approximates the distribution of the asymptotic limit $\mathcal F'_{\phi_m^w}(\mathbb{G}_m^w)$.
\begin{lemma}\label{lemma.bsconsistent}
	Under Assumptions \ref{ass.distribution}, \ref{ass.data}, \ref{ass.Hadamard directional differentiability}, and \ref{ass.FS}, it follows that
	\begin{equation}
	\widehat{\mathcal F'_{\phi_m^w}}(T_n^{1/2}(\hat\phi_m^{w\ast}-\hat\phi_m^w))\overset{\mathbb{P}}{\leadsto} \mathcal F'_{\phi_m^w}(\mathbb{G}_m^w)\text{ in }\mathbb R.
	\end{equation}
\end{lemma}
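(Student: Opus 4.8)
The plan is to recognize the claim as a direct instance of Theorem 3.2 of \citet{fang2014inference} and to check that its hypotheses are all in force. I would match the objects as follows: take the normed space $\mathbb{D}=C([0,1])$ with $\Vert\cdot\Vert_\infty$ and $\mathbb{E}=\mathbb{R}$; let the scaling sequence be $r_n=T_n^{1/2}$; let the parameter be $\phi_m^w$ with plug-in estimator $\hat\phi_m^w$ and bootstrap analogue $\hat\phi_m^{w\ast}$; and let the map be $\mathcal F$ with estimated Hadamard directional derivative $\widehat{\mathcal F'_{\phi_m^w}}$.

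First I would recall that $T_n^{1/2}(\hat\phi_m^w-\phi_m^w)\leadsto\mathbb{G}_m^w$ in $C([0,1])$, which is \eqref{eq.weak convergence phi^u} for $w=u$ and \eqref{eq.weak convergence phi^d} for $w=d$; here $\mathbb{G}_m^w$ is a tight, Borel-measurable, centered Gaussian element of $C([0,1])$, so its law is concentrated on the tangential set relative to which $\mathcal F$ is Hadamard directionally differentiable at $\phi_m^w$ by Assumption \ref{ass.Hadamard directional differentiability} — for the two leading choices $\mathcal S$ and $\mathcal I$ the derivative formulas displayed above are defined on all of $C([0,1])$ and are continuous, so that set is the whole space. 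Second, Lemma \ref{lemma.bootstrap phi asymptotic distribution} supplies the conditional bootstrap limit $T_n^{1/2}(\hat\phi_m^{w\ast}-\hat\phi_m^w)\overset{\mathbb{P}}{\leadsto}\mathbb{G}_m^w$ in $C([0,1])$, along the same scaling. Third, Assumption \ref{ass.FS} provides exactly the uniform-on-compacts consistency of $\widehat{\mathcal F'_{\phi_m^w}}$ for $\mathcal F'_{\phi_m^w}$ that \citet{fang2014inference} impose on derivative estimators.

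Having assembled (i) the weak limit, (ii) the conditional bootstrap weak limit along the same scaling, (iii) Hadamard directional differentiability of $\mathcal F$ at $\phi_m^w$ tangentially to $C([0,1])$ with continuous derivative, and (iv) the consistent derivative estimator, I would then invoke Theorem 3.2 of \citet{fang2014inference} to obtain
\[
\widehat{\mathcal F'_{\phi_m^w}}\bigl(T_n^{1/2}(\hat\phi_m^{w\ast}-\hat\phi_m^w)\bigr)\overset{\mathbb{P}}{\leadsto}\mathcal F'_{\phi_m^w}(\mathbb{G}_m^w)\text{ in }\mathbb{R}
\]
for every $m\ge3$ and $w\in\{u,d\}$. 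There is no substantive obstacle: the proof is essentially a matter of lining up notation with the hypotheses of \citet{fang2014inference}. The only points deserving a sentence of care are confirming that the Gaussian limit $\mathbb{G}_m^w$ is supported on the set relative to which $\mathcal F$ is directionally differentiable — immediate here because that set is all of $C([0,1])$ — and noting that the bootstrap maps involved are measurable enough for the conditional weak-convergence statement to be well posed, which follows from the continuity of $\mathcal F'_{\phi_m^w}$ and the explicit construction of $\widehat{\mathcal F'_{\phi_m^w}}$.
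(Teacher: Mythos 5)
Your proposal is correct and follows essentially the same route as the paper: the paper's proof also simply combines the weak convergence in \eqref{eq.weak convergence phi^u}--\eqref{eq.weak convergence phi^d}, the conditional bootstrap limit from Lemma \ref{lemma.bootstrap phi asymptotic distribution}, and Assumptions \ref{ass.Hadamard directional differentiability} and \ref{ass.FS}, then invokes Theorem 3.2 of \citet{fang2014inference}. Your additional remarks on the tangential set and measurability are sensible checks but not points the paper elaborates.
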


Let $\hat{c}_{m1-\alpha}^w$ denote the $(1-\alpha)$ quantile of the bootstrap law of $\widehat{\mathcal F'_{\phi_m^w}}(T_n^{1/2}(\hat\phi_m^{w\ast}-\hat\phi_m^w))$ conditional on the data:
\begin{align}\label{cvdef}
\hat{c}_{m1-\alpha}^w=\inf\left\{c\in\mathbb R: \mathbb{P}\left(   \widehat{\mathcal F'_{\phi_m^w}}(T_n^{1/2}(\hat\phi_m^{w\ast}-\hat\phi_m^w))\le c\,\middle|\,   \{  X_{i}^{1}\}  _{i=1}^{n_{1}}  ,\{  X_{i}^{2}\}  _{i=1}^{n_{2}}      \right) \ge 1-\alpha   \right\}.
\end{align} 
In practice, we approximate $\hat c_{m1-\alpha}^w$ using the $(1-\alpha)$ quantile of the $B$ independently generated bootstrap statistics, where $B$ is sufficiently large. We set the decision rule for the test as
\begin{align}\label{eq.decision rule}
\text{Reject } \mathrm H_{0} \;(\tilde{\mathrm{H}}_{0}) \text{ if } T_{n}^{1/2}\mathcal{F}(  \hat{\phi}_m^w)>\hat{c}_{m1-\alpha}^w.
\end{align}
The following proposition provides the asymptotic properties of the test.
\begin{proposition}\label{prop.test}
	Suppose that Assumptions \ref{ass.distribution}, \ref{ass.data}, \ref{ass.functional}, \ref{ass.Hadamard directional differentiability}, and \ref{ass.FS} are satisfied.
	\begin{enumerate}[label=(\roman*)]
		\item If $\mathrm H_{0}$ ($\tilde{\mathrm{H}}_{0}$) is true, and the CDF of $\mathcal F'_{\phi_m^w}(\mathbb{G}_m^w)$ is continuous and strictly increasing at its $1-\alpha$ quantile, then $\mathbb P(T_{n}^{1/2}\mathcal{F}(  \hat{\phi}_m^w)>\hat{c}_{m1-\alpha}^w)\to\alpha$.
		\item If $\mathrm H_{0}$ ($\tilde{\mathrm{H}}_{0}$) is false, then for both $\mathcal{F}=\mathcal{S}$ and $\mathcal{F}=\mathcal{I}$, $\mathbb P(T_{n}^{1/2}\mathcal{F}(  \hat{\phi}_m^w)>\hat{c}_{m1-\alpha}^w)\to1$.
	\end{enumerate}
\end{proposition}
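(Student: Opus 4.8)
The plan is to treat the two parts separately, following the standard template for tests based on Hadamard directionally differentiable functionals combined with the Fang--Santos bootstrap. For part (i), I would start from Lemma \ref{lemma.weak convergence F}, which under $\mathrm H_0$ ($\tilde{\mathrm H}_0$) gives $T_n^{1/2}\mathcal F(\hat\phi_m^w)\rightsquigarrow\mathcal F'_{\phi_m^w}(\mathbb G_m^w)$, and from Lemma \ref{lemma.bsconsistent}, which gives $\widehat{\mathcal F'_{\phi_m^w}}(T_n^{1/2}(\hat\phi_m^{w\ast}-\hat\phi_m^w))\overset{\mathbb P}{\leadsto}\mathcal F'_{\phi_m^w}(\mathbb G_m^w)$. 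Since by hypothesis the CDF of the limit $\mathcal F'_{\phi_m^w}(\mathbb G_m^w)$ is continuous and strictly increasing at its $1-\alpha$ quantile $c_{m1-\alpha}^w$, the conditional-on-data consistency of the bootstrap law implies $\hat c_{m1-\alpha}^w\overset{\mathbb P}{\to}c_{m1-\alpha}^w$ (this is the usual argument: weak convergence in probability of the bootstrap law plus uniqueness/continuity of the limiting quantile forces the random quantiles to converge in probability; cf.\ Lemma 11.2.1 of Lehmann--Romano or the argument in \citet{fang2014inference}). Then by Slutsky combined with the continuity of the limiting CDF at $c_{m1-\alpha}^w$,
\begin{align*}
\mathbb P\bigl(T_n^{1/2}\mathcal F(\hat\phi_m^w)>\hat c_{m1-\alpha}^w\bigr)\to\mathbb P\bigl(\mathcal F'_{\phi_m^w}(\mathbb G_m^w)>c_{m1-\alpha}^w\bigr)=\alpha.
\end{align*}

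For part (ii), suppose $\mathrm H_0$ ($\tilde{\mathrm H}_0$) is false, so by Assumption \ref{ass.functional} and the discussion following it, $\mathcal F(\phi_m^w)>0$. The second half of Lemma \ref{lemma.weak convergence F} already states that $T_n^{1/2}\mathcal F(\hat\phi_m^w)\to\infty$ in probability under the alternative (this follows because $\mathcal F(\hat\phi_m^w)\to\mathcal F(\phi_m^w)>0$ by continuity of $\mathcal F$ and consistency of $\hat\phi_m^w$, and $T_n^{1/2}\to\infty$). So it remains only to show the critical value $\hat c_{m1-\alpha}^w$ does not diverge as fast, i.e.\ is bounded in probability. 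For this I would argue that $\widehat{\mathcal F'_{\phi_m^w}}(T_n^{1/2}(\hat\phi_m^{w\ast}-\hat\phi_m^w))$ remains stochastically bounded: Lemma \ref{lemma.bootstrap phi asymptotic distribution} gives $T_n^{1/2}(\hat\phi_m^{w\ast}-\hat\phi_m^w)\overset{\mathbb P}{\leadsto}\mathbb G_m^w$, and since $\mathcal S$ and $\mathcal I$ (more precisely their estimated derivatives $\widehat{\mathcal S'_{\phi_m^w}}$, $\widehat{\mathcal I'_{\phi_m^w}}$) are Lipschitz in the sup-norm with constant $1$ and annihilate $0$ when the contact set is handled appropriately, the image process is bounded in conditional probability; hence $\hat c_{m1-\alpha}^w=O_{\mathbb P}(1)$. (Both choices $\mathcal F=\mathcal S$ and $\mathcal F=\mathcal I$ are needed explicitly here because the boundedness of $\widehat{\mathcal F'_{\phi_m^w}}$ relies on their concrete Lipschitz structure — the abstract Assumption \ref{ass.FS} alone does not immediately give it.) Combining $T_n^{1/2}\mathcal F(\hat\phi_m^w)\to\infty$ in probability with $\hat c_{m1-\alpha}^w=O_{\mathbb P}(1)$ yields $\mathbb P(T_n^{1/2}\mathcal F(\hat\phi_m^w)>\hat c_{m1-\alpha}^w)\to1$.

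The main obstacle is the argument in part (i) that $\hat c_{m1-\alpha}^w\overset{\mathbb P}{\to}c_{m1-\alpha}^w$: one must pass from weak convergence \emph{in probability conditional on the data} (the $\overset{\mathbb P}{\leadsto}$ mode) of the bootstrap law to convergence in (unconditional) probability of the induced $1-\alpha$ quantiles, which requires the continuity and strict monotonicity of the limiting CDF at that quantile — exactly the regularity condition imposed in the statement — and a careful invocation of the relevant quantile-continuity lemma (e.g.\ as in \citet[proof of Corollary 3.2]{fang2014inference} or \citet{K17}). Once that convergence is in hand, parts (i) and (ii) are routine Slutsky-type arguments. I would also note that the continuity hypothesis on the limiting CDF is only a mild requirement: it holds, for instance, whenever $\Psi(\phi_m^w)$ or $B_0(\phi_m^w)$ has positive Lebesgue measure or the Gaussian process $\mathbb G_m^w$ is nondegenerate on the relevant contact set, but we do not need to verify this for the proposition as stated.
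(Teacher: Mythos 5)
Your proposal is correct and follows the same overall strategy as the paper: part (i) is exactly the content of Theorem 3.3 of \citet{fang2014inference} (which the paper simply cites, given Lemmas \ref{lemma.weak convergence F} and \ref{lemma.bsconsistent}); your unfolding of the quantile-convergence-plus-Slutsky argument is the standard proof of that theorem, so nothing is missing there. The only substantive difference is in how the critical value is shown not to matter under the alternative in part (ii). The paper introduces $\tilde{c}_{m1-\alpha}^w$, the conditional $(1-\alpha)$ quantile of the bootstrap law of the \emph{full} functional $\mathcal F\bigl(T_n^{1/2}(\hat\phi_m^{w\ast}-\hat\phi_m^w)\bigr)$, observes that $\hat{c}_{m1-\alpha}^w\le\tilde{c}_{m1-\alpha}^w$ because $\widehat{\mathcal S'_{\phi_m^w}}(h)\le\mathcal S(h)$ and $\widehat{\mathcal I'_{\phi_m^w}}(h)\le\mathcal I(h)$ for every $h$ (the estimated contact set being a subset of $[0,1]$), and then argues via Theorem S.1.1 of \citet{fang2014inference} that $\tilde{c}_{m1-\alpha}^w$ converges in probability to the $1-\alpha$ quantile of $\mathcal F(\mathbb G_m^w)$, hence is $O_{\mathbb P}(1)$. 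You instead bound the bootstrap statistic by $\Vert T_n^{1/2}(\hat\phi_m^{w\ast}-\hat\phi_m^w)\Vert_\infty$ using the Lipschitz/monotone structure of $\widehat{\mathcal S'_{\phi_m^w}}$ and $\widehat{\mathcal I'_{\phi_m^w}}$ and invoke Lemma \ref{lemma.bootstrap phi asymptotic distribution} to conclude $\hat c_{m1-\alpha}^w=O_{\mathbb P}(1)$; this is a valid alternative route to the same stochastic-boundedness conclusion, and like the paper's argument it correctly exploits (as you note) the concrete form of $\mathcal S$ and $\mathcal I$ rather than Assumption \ref{ass.FS} alone, which is precisely why part (ii) is stated only for these two functionals. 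Either way, combining $T_n^{1/2}\mathcal F(\hat\phi_m^w)\to\infty$ in probability with a bounded-in-probability critical value finishes the proof, so the proposal is sound.
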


Proposition \ref{prop.test} shows that the test is asymptotically size controlled and consistent. By Theorem 11.1 of \citet{davydov1998local}, for $\mathcal{F}=\mathcal{S}$ and $\mathcal{F}=\mathcal{I}$, if the asymptotic limit $\mathcal{F}'_{\phi_m^w}(\mathbb{G}_m^w)\neq0$, then the CDF of $\mathcal{F}'_{\phi_m^w}(\mathbb{G}_m^w)$ is differentiable and has a positive derivative everywhere except at countably many points in its support.
As discussed in \citet{Beare2017improved}, the null configurations for which $\mathcal{F}'_{\phi_m^w}(\mathbb{G}_m^w)\neq0$ constitute the boundary of the null as defined in \citet{LSW10}. The null configurations that are not on the boundary may include the cases where $\Lambda_1^m=\Lambda_2^m$ ($\tilde\Lambda_1^m=\tilde\Lambda_2^m$) happens only at $0$ ($1$) or in a set with measure $0$. If $\mathcal{F}'_{\phi_m^w}(\mathbb{G}_m^w)=0$ under the null, the test statistic converges to zero in probability by Lemma \ref{lemma.weak convergence F} and so does the bootstrap critical value by Lemma \ref{lemma.bsconsistent}. Proposition \ref{prop.test} does not provide a result on how the rejection rate would behave in this case. 
As mentioned by \citet{Beare2017improved} and \citet{sun2018ivvalidity}, this is a common theoretical limitation for irregular testing problems. In practice, we may replace the bootstrap critical value $\hat{c}^w_{m1-\alpha}$ with $\max\{\hat{c}^w_{m1-\alpha},\eta\}$ or $\hat{c}^w_{m1-\alpha}+\eta$, where $\eta$ is some extremely small positive number \citep[p.~13]{DH16}. Monte Carlo simulations in Section \ref{sec.simulation} show that the rejection rates of our test are well controlled at null configurations with $\eta=0$.

\section{Simulation Evidence}\label{sec.simulation}


In this section, we show the finite sample properties of the proposed test via a set of Monte Carlo simulations. All the simulations consider independent samples. The experimental replications are set to $1000$. The warp-speed method of \citet{giacomini2013warp} is employed to expedite the simulations. The nominal significance level $\alpha=0.05$.
To estimate the contact set in the test statistic, we use five different
tuning parameter values: $\tau_n=1,2,3,4,\infty$.
In the simulations for the size control of the test, we let $n_1 = n_2 = 2000$.  We choose the value of the tuning parameter based on the empirical size of the test. In the simulations for the empirical power of the test, we set $n_1 = n_2=n \in \{200,500,1000,2000\}$ to show the consistency of the test.

We design the data generating processes (DGPs) based on income distributions belonging to the double Pareto parametric
family following \citet{Beare2017improved}.\footnote{\citet{reed2001pareto,reed2003pareto} and \citet{toda2012double} show that income distributions can be well approximated by members of the double Pareto parametric
family.} 
For some $M>0$, the probability density function of double Pareto distribution is as follows: 
\[
f\left(  x\right)  =\left\{
\begin{array}
[c]{c}%
\frac{\alpha\beta}{\alpha+\beta}M^{\alpha}x^{-\alpha-1}\\
\frac{\alpha\beta}{\alpha+\beta}M^{-\beta}x^{\beta-1}%
\end{array}
\right.
\begin{array}
[c]{c}%
x\geq M,\\
0\leq x<M.
\end{array}
\]
The scale parameter $M$ is set to one, and the
shape parameters $\alpha,\beta>0$ are set to different values in our simulations. We follow \citet{Beare2017improved} and write
$X\sim\mathrm{dP}(\alpha,\beta)$ to indicate that a random variable $X$ has the
double Pareto distribution with $M=1$ and shape
parameters $\alpha, \beta$. As mentioned in \citet{Beare2017improved}, when $\alpha>2$, the distribution $\mathrm{dP}(\alpha,\beta)$ satisfies Assumption \ref{ass.distribution}. We set $m=3$ in all simulations.

\subsection{Size Control and Tuning Parameter Selection}
We first study the empirical size of the test and choose the value of the tuning parameter $\tau_n$ in finite samples. We generate i.i.d.\ data for $X^1$ and $X^2$ from the same distribution $\mathrm{dP}(\alpha,\beta)$ and report the rejection rates for $\alpha\in \{2,3,4,5\}$\footnote{When $\alpha=2$, Assumption \ref{ass.distribution} is violated. Our results show that the size of the test is also controlled in this case.} and $\beta\in\{1,\ldots,8\}$. The rejection rates are reported in Tables \ref{tab:Rej H0 ISD up} and \ref{tab:Rej H0 ISD down}. The simulation results show that when $\tau_n=3$, the rejection rates are close to those in the conservative case where $\tau_n=\infty$ and are close to $\alpha$. Based on these results, we suggest using $\tau_n=3$ for sample sizes $n_1\le 2000$ and $n_2\le 2000$. When the sample sizes increase, $\tau_n$ may be increased accordingly. 
	
	\begin{table}[h]
		
		\centering
		\caption{Rejection Rates under $\mathrm H_{0}$ for $3$rd-degree Upward Inverse Stochastic Dominance}
		\scalebox{0.9}{
			\begin{tabular}{  c  c  c  c  c  c  c  c  c  c  c  }
				\hline
				\hline
				\multirow{2}{*}{$\mathcal{F}$} & \multirow{2}{*}{$\alpha$} & \multirow{2}{*}{$\tau_n$}& \multicolumn{8}{c}{$\beta$} \\
				\cline{4-11}
				& & & 1 & 2 & 3 & 4 & 5 & 6 & 7 & 8  \\
				\hline
				\multirow{20}{*}{$\mathcal{F}=\mathcal{S}$}	
				&   & 1 & 0.071 & 0.078 & 0.095 & 0.083 & 0.078 & 0.052 & 0.061 & 0.097 \\ 
				&   & 2 & 0.052 & 0.060 & 0.074 & 0.067 & 0.053 & 0.038 & 0.049 & 0.079 \\ 
				& 2 & 3 & 0.051 & 0.060 & 0.065 & 0.057 & 0.053 & 0.038 & 0.049 & 0.079 \\ 
				&   & 4 & 0.051 & 0.060 & 0.065 & 0.056 & 0.052 & 0.038 & 0.049 & 0.079 \\ 
				&   & $\infty$ & 0.051 & 0.060 & 0.065 & 0.056 & 0.052 & 0.038 & 0.049 & 0.079 \\ 
				\cline{2-11}
				&   & 1 & 0.120 & 0.067 & 0.065 & 0.088 & 0.062 & 0.067 & 0.059 & 0.070 \\ 
				&   & 2 & 0.070 & 0.052 & 0.054 & 0.056 & 0.049 & 0.053 & 0.047 & 0.058 \\ 
				& 3 & 3 & 0.070 & 0.043 & 0.050 & 0.056 & 0.047 & 0.053 & 0.047 & 0.058 \\ 
				&   & 4 & 0.070 & 0.043 & 0.050 & 0.056 & 0.043 & 0.053 & 0.047 & 0.058 \\ 
				&   & $\infty$ & 0.070 & 0.043 & 0.050 & 0.056 & 0.043 & 0.053 & 0.047 & 0.058 \\ 
				\cline{2-11}
				&   & 1 & 0.065 & 0.062 & 0.069 & 0.067 & 0.097 & 0.050 & 0.071 & 0.075 \\ 
				&   & 2 & 0.053 & 0.050 & 0.053 & 0.050 & 0.066 & 0.037 & 0.057 & 0.052 \\ 
				& 4 & 3 & 0.051 & 0.046 & 0.051 & 0.048 & 0.059 & 0.037 & 0.057 & 0.052 \\ 
				&   & 4 & 0.051 & 0.046 & 0.051 & 0.048 & 0.059 & 0.037 & 0.057 & 0.052 \\ 
				&   & $\infty$ & 0.051 & 0.046 & 0.051 & 0.048 & 0.059 & 0.037 & 0.057 & 0.052 \\ 
				\cline{2-11}
				&   & 1 & 0.102 & 0.075 & 0.062 & 0.053 & 0.073 & 0.073 & 0.059 & 0.071 \\ 
				&   & 2 & 0.080 & 0.063 & 0.043 & 0.050 & 0.050 & 0.064 & 0.042 & 0.057 \\ 
				& 5 & 3 & 0.078 & 0.061 & 0.043 & 0.045 & 0.048 & 0.064 & 0.041 & 0.052 \\ 
				&   & 4 & 0.078 & 0.061 & 0.043 & 0.045 & 0.048 & 0.064 & 0.041 & 0.051 \\ 
				&   & $\infty$ & 0.078 & 0.061 & 0.043 & 0.045 & 0.048 & 0.064 & 0.041 & 0.051 \\ 
				\hline
				\multirow{20}{*}{$\mathcal{F}=\mathcal{I}$}	
				&   & 1 & 0.076 & 0.085 & 0.112 & 0.078 & 0.087 & 0.076 & 0.076 & 0.113 \\ 
				&   & 2 & 0.048 & 0.061 & 0.075 & 0.060 & 0.052 & 0.045 & 0.051 & 0.068 \\ 
				& 2 & 3 & 0.046 & 0.052 & 0.063 & 0.052 & 0.052 & 0.045 & 0.047 & 0.067 \\ 
				&   & 4 & 0.046 & 0.052 & 0.063 & 0.052 & 0.052 & 0.045 & 0.047 & 0.067 \\ 
				&   & $\infty$ & 0.046 & 0.052 & 0.063 & 0.052 & 0.052 & 0.045 & 0.047 & 0.067 \\ 
				\cline{2-11}
				&   & 1 & 0.110 & 0.068 & 0.077 & 0.072 & 0.071 & 0.078 & 0.082 & 0.077 \\ 
				&   & 2 & 0.058 & 0.053 & 0.046 & 0.057 & 0.047 & 0.053 & 0.049 & 0.054 \\ 
				& 3 & 3 & 0.057 & 0.048 & 0.045 & 0.053 & 0.046 & 0.053 & 0.046 & 0.051 \\ 
				&   & 4 & 0.057 & 0.048 & 0.045 & 0.053 & 0.046 & 0.053 & 0.045 & 0.051 \\ 
				&   & $\infty$ & 0.057 & 0.048 & 0.045 & 0.053 & 0.046 & 0.053 & 0.045 & 0.051 \\
				\cline{2-11}
				&   & 1 & 0.088 & 0.076 & 0.082 & 0.064 & 0.072 & 0.059 & 0.086 & 0.100 \\ 
				&   & 2 & 0.052 & 0.050 & 0.057 & 0.047 & 0.050 & 0.032 & 0.069 & 0.048 \\ 
				& 4 & 3 & 0.051 & 0.049 & 0.054 & 0.041 & 0.043 & 0.032 & 0.069 & 0.048 \\ 
				&   & 4 & 0.051 & 0.049 & 0.054 & 0.041 & 0.043 & 0.030 & 0.069 & 0.048 \\ 
				&   & $\infty$ & 0.051 & 0.049 & 0.054 & 0.041 & 0.043 & 0.030 & 0.069 & 0.048 \\ 
				\cline{2-11}
				&   & 1 & 0.101 & 0.084 & 0.072 & 0.053 & 0.084 & 0.083 & 0.064 & 0.065 \\ 
				&   & 2 & 0.075 & 0.069 & 0.040 & 0.042 & 0.047 & 0.068 & 0.041 & 0.052 \\ 
				& 5 & 3 & 0.073 & 0.068 & 0.039 & 0.035 & 0.047 & 0.066 & 0.040 & 0.049 \\ 
				&   & 4 & 0.073 & 0.068 & 0.039 & 0.035 & 0.047 & 0.066 & 0.040 & 0.048 \\ 
				&   & $\infty$ & 0.073 & 0.068 & 0.039 & 0.035 & 0.047 & 0.066 & 0.040 & 0.048 \\ 
				
				\hline
				\hline

			\end{tabular}
		}
		
		\label{tab:Rej H0 ISD up}
	\end{table}

	\begin{table}[h]
		
		\centering
		\caption{Rejection Rates under $\tilde{\mathrm H}_{0}$ for $3$rd-degree Downward Inverse Stochastic Dominance}
		\scalebox{0.9}{
			\begin{tabular}{  c  c  c  c  c  c  c  c  c  c  c  }
				\hline
				\hline
				\multirow{2}{*}{$\mathcal{F}$} & \multirow{2}{*}{$\alpha$} & \multirow{2}{*}{$\tau_n$}& \multicolumn{8}{c}{$\beta$} \\
				\cline{4-11}
				& & & 1 & 2 & 3 & 4 & 5 & 6 & 7 & 8  \\
				\hline
				\multirow{20}{*}{$\mathcal{F}=\mathcal{S}$}	
				&   & 1 & 0.070 & 0.073 & 0.091 & 0.083 & 0.077 & 0.049 & 0.059 & 0.096 \\ 
				&   & 2 & 0.052 & 0.069 & 0.074 & 0.067 & 0.054 & 0.038 & 0.049 & 0.078 \\ 
				& 2 & 3 & 0.051 & 0.060 & 0.065 & 0.057 & 0.053 & 0.038 & 0.049 & 0.078 \\ 
				&   & 4 & 0.051 & 0.060 & 0.065 & 0.055 & 0.051 & 0.038 & 0.049 & 0.078 \\ 
				&   & $\infty$ & 0.051 & 0.060 & 0.065 & 0.055 & 0.051 & 0.038 & 0.049 & 0.078 \\ 
				\cline{2-11}
				&   & 1 & 0.120 & 0.067 & 0.066 & 0.082 & 0.055 & 0.067 & 0.062 & 0.072 \\ 
				&   & 2 & 0.070 & 0.052 & 0.053 & 0.056 & 0.049 & 0.053 & 0.047 & 0.058 \\ 
				& 3 & 3 & 0.070 & 0.043 & 0.050 & 0.056 & 0.047 & 0.053 & 0.047 & 0.058 \\ 
				&   & 4 & 0.070 & 0.043 & 0.050 & 0.056 & 0.043 & 0.053 & 0.047 & 0.058 \\ 
				&   & $\infty$ & 0.070 & 0.043 & 0.050 & 0.056 & 0.043 & 0.053 & 0.047 & 0.058 \\ 
				\cline{2-11}
				&   & 1 & 0.065 & 0.056 & 0.060 & 0.068 & 0.097 & 0.051 & 0.070 & 0.075 \\ 
				&   & 2 & 0.052 & 0.050 & 0.052 & 0.050 & 0.065 & 0.038 & 0.056 & 0.053 \\ 
				& 4 & 3 & 0.050 & 0.046 & 0.052 & 0.048 & 0.059 & 0.037 & 0.056 & 0.052 \\ 
				&   & 4 & 0.050 & 0.046 & 0.052 & 0.048 & 0.059 & 0.037 & 0.056 & 0.052 \\ 
				&   & $\infty$ & 0.050 & 0.046 & 0.052 & 0.048 & 0.059 & 0.037 & 0.056 & 0.052 \\ 
				\cline{2-11}
				&   & 1 & 0.096 & 0.071 & 0.062 & 0.054 & 0.074 & 0.080 & 0.058 & 0.073 \\ 
				&   & 2 & 0.080 & 0.063 & 0.043 & 0.050 & 0.050 & 0.064 & 0.042 & 0.056 \\ 
				& 5 & 3 & 0.071 & 0.061 & 0.043 & 0.045 & 0.049 & 0.064 & 0.042 & 0.051 \\ 
				&   & 4 & 0.071 & 0.061 & 0.043 & 0.045 & 0.049 & 0.064 & 0.042 & 0.051 \\ 
				&   & $\infty$ & 0.071 & 0.061 & 0.043 & 0.045 & 0.049 & 0.064 & 0.042 & 0.051 \\ 
				\hline
				\multirow{20}{*}{$\mathcal{F}=\mathcal{I}$}	
				&   & 1 & 0.076 & 0.080 & 0.106 & 0.087 & 0.080 & 0.053 & 0.058 & 0.098 \\ 
				&   & 2 & 0.054 & 0.062 & 0.076 & 0.070 & 0.054 & 0.044 & 0.050 & 0.077 \\ 
				& 2 & 3 & 0.049 & 0.060 & 0.072 & 0.058 & 0.050 & 0.044 & 0.048 & 0.074 \\ 
				&   & 4 & 0.049 & 0.060 & 0.072 & 0.057 & 0.048 & 0.044 & 0.048 & 0.074 \\ 
				&   & $\infty$ & 0.049 & 0.060 & 0.072 & 0.057 & 0.048 & 0.044 & 0.048 & 0.074 \\ 
				\cline{2-11}
				&   & 1 & 0.127 & 0.078 & 0.072 & 0.092 & 0.077 & 0.073 & 0.066 & 0.092 \\ 
				&   & 2 & 0.079 & 0.046 & 0.051 & 0.052 & 0.051 & 0.049 & 0.046 & 0.059 \\ 
				& 3 & 3 & 0.077 & 0.044 & 0.050 & 0.051 & 0.049 & 0.049 & 0.044 & 0.059 \\ 
				&   & 4 & 0.077 & 0.044 & 0.050 & 0.051 & 0.049 & 0.049 & 0.044 & 0.059 \\ 
				&   & $\infty$ & 0.077 & 0.044 & 0.050 & 0.051 & 0.049 & 0.049 & 0.044 & 0.059 \\
				\cline{2-11} 
				&   & 1 & 0.075 & 0.059 & 0.085 & 0.070 & 0.102 & 0.053 & 0.081 & 0.094 \\ 
				&   & 2 & 0.057 & 0.051 & 0.055 & 0.049 & 0.067 & 0.038 & 0.053 & 0.046 \\ 
				& 4 & 3 & 0.051 & 0.047 & 0.052 & 0.048 & 0.063 & 0.037 & 0.052 & 0.046 \\ 
				&   & 4 & 0.051 & 0.047 & 0.051 & 0.048 & 0.063 & 0.037 & 0.052 & 0.046 \\ 
				&   & $\infty$ & 0.051 & 0.047 & 0.051 & 0.048 & 0.063 & 0.037 & 0.052 & 0.046 \\ 
				\cline{2-11}
				&   & 1 & 0.110 & 0.080 & 0.067 & 0.064 & 0.078 & 0.082 & 0.066 & 0.082 \\ 
				&   & 2 & 0.084 & 0.055 & 0.040 & 0.051 & 0.060 & 0.071 & 0.050 & 0.056 \\ 
				& 5 & 3 & 0.075 & 0.053 & 0.040 & 0.048 & 0.055 & 0.069 & 0.049 & 0.053 \\ 
				&   & 4 & 0.075 & 0.053 & 0.040 & 0.048 & 0.055 & 0.069 & 0.049 & 0.052 \\ 
				&   & $\infty$ & 0.075 & 0.053 & 0.040 & 0.048 & 0.055 & 0.069 & 0.049 & 0.052 \\ 
				
				\hline
				\hline

			\end{tabular}
		}
		
		\label{tab:Rej H0 ISD down}
	\end{table}
	
	\subsection{Empirical Power}
	
	We now study the empirical power of the test in finite samples using $\tau_n=3$ as selected above.  	
	In the simulations for upward inverse stochastic dominance, we generate i.i.d.\ data $\{X_i^1\}_{i=1}^{n_1}$ for $X^1$ from $\mathrm{dP}(2.1,1.5)$, and i.i.d.\ data $\{X_i^2\}_{i=1}^{n_2}$ for $X^2$ from $\mathrm{dP}(100,\beta)$ whose law depends on $\beta$. In this setting, $\mathrm{H}_0$ does not hold with $m=3$. We let $\beta$ vary between $2.91$ and $3$ in increments of $0.01$. The rejection rates are reported in Table \ref{tab:Rej H1 ISD up}.
	In the simulations for downward inverse stochastic dominance, we generate i.i.d.\ data $\{X_i^1\}_{i=1}^{n_1}$ for $X^1$ from $\mathrm{dP}(2.1,1.5)$, and i.i.d.\ data $\{X_i^2\}_{i=1}^{n_2}$ for $X^2$ from $\mathrm{dP}(\alpha,4)$ whose law depends on $\alpha$. We let $\alpha$ vary between $10$ and $100$
	in increments of $10$.  The rejection rates are reported in Table \ref{tab:Rej H1 ISD down}.

    All the results show that as the sample size increases, the empirical power increases to $1$, which demonstrates the good finite sample power property of the test.

		\begin{table}[h]
		
		\centering
		\caption{Rejection Rates under $\mathrm H_{1}$ for $3$rd-degree Upward Inverse Stochastic Dominance}
		\scalebox{0.9}{
			\begin{tabular}{  c  c  c  c  c  c  c  c  c  c  c  c   }
				\hline
				\hline
				\multirow{2}{*}{$\mathcal{F}$} &  \multirow{2}{*}{$n$}& \multicolumn{10}{c}{$\beta$} \\
				\cline{3-12}
				& & 2.91 & 2.92 & 2.93 & 2.94 & 2.95 & 2.96 & 2.97 & 2.98 & 2.99 & 3 \\
				\hline
				\multirow{4}{*}{$\mathcal{F}=\mathcal{S}$}	
				& 200 & 0.058 & 0.054 & 0.052 & 0.064 & 0.068 & 0.077 & 0.067 & 0.082 & 0.085 & 0.065 \\ 
				& 500 & 0.189 & 0.194 & 0.256 & 0.217 & 0.296 & 0.225 & 0.251 & 0.318 & 0.184 & 0.274 \\ 
				& 1000 & 0.634 & 0.560 & 0.436 & 0.576 & 0.670 & 0.655 & 0.682 & 0.679 & 0.650 & 0.755 \\ 
				& 2000 & 0.974 & 0.981 & 0.976 & 0.952 & 0.974 & 0.981 & 0.969 & 0.989 & 0.993 & 0.998 \\  
				\hline
				\multirow{4}{*}{$\mathcal{F}=\mathcal{I}$}	
				& 200 & 0.310 & 0.373 & 0.354 & 0.387 & 0.372 & 0.336 & 0.431 & 0.412 & 0.391 & 0.435 \\ 
				& 500 & 0.791 & 0.746 & 0.810 & 0.799 & 0.852 & 0.855 & 0.853 & 0.897 & 0.833 & 0.878 \\ 
				& 1000 & 0.996 & 0.993 & 0.995 & 0.993 & 0.998 & 0.997 & 0.998 & 0.993 & 0.998 & 0.997 \\ 
				& 2000 & 1.000 & 1.000 & 1.000 & 1.000 & 1.000 & 1.000 & 1.000 & 1.000 & 1.000 & 1.000 \\    

				\hline
				\hline		
				
			\end{tabular}
		}
		
		\label{tab:Rej H1 ISD up}
	\end{table}

	\begin{table}[h]
		
		\centering
		\caption{Rejection Rates under $\tilde{\mathrm H}_{1}$ for $3$rd-degree Downward Inverse Stochastic Dominance}
		\scalebox{0.9}{
			\begin{tabular}{  c  c  c  c  c  c  c  c  c  c  c c }
				\hline
				\hline
				\multirow{2}{*}{$\mathcal{F}$} &  \multirow{2}{*}{$n$}& \multicolumn{10}{c}{$\alpha$} \\
				\cline{3-12}
				& & 10 & 20 & 30 & 40 & 50 & 60 & 70 & 80 & 90 & 100 \\
				\hline
				\multirow{4}{*}{$\mathcal{F}=\mathcal{S}$}	
				& 200 & 0.818 & 0.580 & 0.458 & 0.412 & 0.360 & 0.327 & 0.336 & 0.338 & 0.317 & 0.305 \\ 
				& 500 & 0.996 & 0.899 & 0.828 & 0.747 & 0.722 & 0.645 & 0.628 & 0.626 & 0.567 & 0.569 \\ 
				& 1000 & 1.000 & 0.994 & 0.963 & 0.928 & 0.918 & 0.891 & 0.884 & 0.844 & 0.812 & 0.858 \\ 
				& 2000 & 1.000 & 1.000 & 0.999 & 0.993 & 0.995 & 0.996 & 0.981 & 0.987 & 0.981 & 0.989 \\  
				\hline
				\multirow{4}{*}{$\mathcal{F}=\mathcal{I}$}	
				& 200 & 0.519 & 0.226 & 0.134 & 0.126 & 0.089 & 0.094 & 0.096 & 0.087 & 0.067 & 0.066 \\ 
				& 500 & 0.971 & 0.661 & 0.414 & 0.297 & 0.290 & 0.213 & 0.185 & 0.217 & 0.155 & 0.178 \\ 
				& 1000 & 1.000 & 0.957 & 0.735 & 0.665 & 0.576 & 0.498 & 0.452 & 0.422 & 0.381 & 0.372 \\ 
				& 2000 & 1.000 & 1.000 & 0.997 & 0.976 & 0.950 & 0.927 & 0.884 & 0.852 & 0.851 & 0.842 \\  
				
				\hline
				\hline		
				
			\end{tabular}
		}
		
		\label{tab:Rej H1 ISD down}
	\end{table}

\section{Empirical Application}

We revisit an empirical example of the inequality growth in the United Kingdom discussed by \citet{aaberge2021ranking} to show the performance of the proposed test in practice. As illustrated by \citet{aaberge2021ranking}, the data come from the European Community Household Panel (ECHP) for 1995–2001, and from the European Union Statistics on Income and Living Conditions (EU-SILC) for 2005–2010. See more details about the data in \citet[p.~657]{aaberge2021ranking}.  \citet{aaberge2021ranking} find that the use of 3rd-degree upward dominance works little for raising the ability to rank income distributions, while 3rd-degree downward dominance provides an almost complete ranking of the income distributions.
	
	We use the same data to reconduct the tests. 
 For two years A and B, we test two null hypotheses: (i) Year A dominates Year B and (ii) Year B dominates Year A. 
		If we reject (i), while not rejecting (ii), we conclude that B strictly dominates A, that is $\Lambda_B^{m}(p)\ge \Lambda_A^{m}(p)$ ($\tilde\Lambda_B^{m}(p)\ge \tilde\Lambda_A^{m}(p)$) for all $p\in[0,1]$ and the inequality holds strictly for some $p\in[0,1]$, denoted by A $<$ B or B $>$ A. Similarly, if we cannot reject (i), 
		while rejecting (ii), we conclude that A strictly dominates B. Otherwise, we conclude
		that we do not find a strict dominance relationship between A and B.
 Tables \ref{tab:Application_S_upwards}--\ref{tab:Application_I_downwards} show the test results for the $3$rd-degree upward and downward dominances obtained from our test using $\mathcal{F}=\mathcal{S}$ and $\mathcal{F}=\mathcal{I}$. From these results, we conclude that both the upward and downward dominance tests can provide a relatively complete ranking of the income distributions. 
	
	\begin{table}[h]
		
		\centering
		\caption{Ranking of Income
			Distributions by $3$rd-degree
			Upward Dominance Using $\mathcal{S}$}
		\scalebox{0.8}{
			\begin{tabular}{  c  c  c  c  c  c  c  c  c  c  c  c  c  c }
				\hline
				\hline

				Year & 1995 & 1996 & 1997 & 1998 & 1999 & 2000 & 2001 & 2005 & 2006 & 2007 & 2008 & 2009 & 2010 \\ \hline
				1994 & ~ & ~ & $<$ & $<$ & $<$ & $<$ & $<$ & $<$ & $<$ & $<$ & $<$ & $<$ & $<$ \\ 
				1995 & ~ & ~ & $<$ & $<$ & $<$ & $<$ & $<$ & $<$ & $<$ & $<$ & $<$ & $<$ & $<$ \\ 
				1996 & ~ & ~ & $<$ & $<$ & $<$ & $<$ & $<$ & $<$ & $<$ & $<$ & $<$ & $<$ & $<$ \\ 
				1997 & ~ & ~ & ~ & ~ & ~ & $<$ & $<$ & $<$ & $<$ & $<$ & $<$ & $<$ & $<$ \\ 
				1998 & ~ & ~ & ~ & ~ & ~ & $<$ & $<$ & $<$ & $<$ & $<$ & $<$ & $<$ & $<$ \\ 
				1999 & ~ & ~ & ~ & ~ & ~ & $<$ & $<$ & $<$ & $<$ & $<$ & $<$ & $<$ & $<$ \\ 
				2000 & ~ & ~ & ~ & ~ & ~ & ~ & ~ & $<$ & $<$ & $<$ & $<$ & $>$ & ~ \\ 
				2001 & ~ & ~ & ~ & ~ & ~ & ~ & ~ & ~ & ~ & ~ & ~ & $>$ & $>$ \\ 
				2005 & ~ & ~ & ~ & ~ & ~ & ~ & ~ & ~ & ~ & $<$ & $>$ & $>$ & $>$ \\ 
				2006 & ~ & ~ & ~ & ~ & ~ & ~ & ~ & ~ & ~ & ~ & $>$ & $>$ & $>$ \\ 
				2007 & ~ & ~ & ~ & ~ & ~ & ~ & ~ & ~ & ~ & ~ & $>$ & $>$ & $>$ \\ 
				2008 & ~ & ~ & ~ & ~ & ~ & ~ & ~ & ~ & ~ & ~ & ~ & $>$ & $>$ \\ 
				2009 & ~ & ~ & ~ & ~ & ~ & ~ & ~ & ~ & ~ & ~ & ~ & ~ & ~ \\

				\hline
				\hline		
				
			\end{tabular}
		}
		
		\label{tab:Application_S_upwards}
	\end{table}

	\begin{table}[h]
		
		\centering
		\caption{Ranking of Income
			Distributions by $3$rd-degree
			Downward Dominance Using $\mathcal{S}$}
		\scalebox{0.8}{
			\begin{tabular}{  c  c  c  c  c  c  c  c  c  c  c  c  c  c }
				\hline
				\hline
				
				Year & 1995 & 1996 & 1997 & 1998 & 1999 & 2000 & 2001 & 2005 & 2006 & 2007 & 2008 & 2009 & 2010 \\ \hline
				1994 & ~ & ~ & $<$ & $<$ & $<$ & $<$ & $<$ & $<$ & $<$ & $<$ & $<$ & $<$ & $<$ \\ 
				1995 & ~ & ~ & $<$ & $<$ & $<$ & $<$ & $<$ & $<$ & $<$ & $<$ & $<$ & $<$ & $<$ \\ 
				1996 & ~ & ~ & $<$ & $<$ & $<$ & $<$ & $<$ & $<$ & $<$ & $<$ & $<$ & $<$ & $<$ \\ 
				1997 & ~ & ~ & ~ & ~ & ~ & $<$ & $<$ & $<$ & $<$ & $<$ & $<$ & $<$ & $<$ \\ 
				1998 & ~ & ~ & ~ & ~ & ~ & $<$ & $<$ & $<$ & $<$ & $<$ & $<$ & $<$ & $<$ \\ 
				1999 & ~ & ~ & ~ & ~ & ~ & $<$ & $<$ & $<$ & $<$ & $<$ & $<$ & $<$ & $<$ \\ 
				2000 & ~ & ~ & ~ & ~ & ~ & ~ & ~ & $<$ & $<$ & $<$ & $<$ & $>$ & ~ \\ 
				2001 & ~ & ~ & ~ & ~ & ~ & ~ & ~ & $<$ & $<$ & $<$ & $<$ & $>$ & $>$ \\ 
				2005 & ~ & ~ & ~ & ~ & ~ & ~ & ~ & ~ & ~ & ~ & $>$ & $>$ & $>$ \\ 
				2006 & ~ & ~ & ~ & ~ & ~ & ~ & ~ & ~ & ~ & ~ & $>$ & $>$ & $>$ \\ 
				2007 & ~ & ~ & ~ & ~ & ~ & ~ & ~ & ~ & ~ & ~ & $>$ & $>$ & $>$ \\ 
				2008 & ~ & ~ & ~ & ~ & ~ & ~ & ~ & ~ & ~ & ~ & ~ & $>$ & $>$ \\ 
				2009 & ~ & ~ & ~ & ~ & ~ & ~ & ~ & ~ & ~ & ~ & ~ & ~ & ~ \\ 
				
				\hline
				\hline		
				
			\end{tabular}
		}
		
		\label{tab:Application_S_downwards}
	\end{table}

	\begin{table}[h]
		
		\centering
		\caption{Ranking of Income
			Distributions by $3$rd-degree
			Upward Dominance Using $\mathcal{I}$}
		\scalebox{0.8}{
			\begin{tabular}{  c  c  c  c  c  c  c  c  c  c  c  c  c  c }
				\hline
				\hline
				
				Year & 1995 & 1996 & 1997 & 1998 & 1999 & 2000 & 2001 & 2005 & 2006 & 2007 & 2008 & 2009 & 2010 \\ \hline
				1994 & ~ & ~ & $<$ & $<$ & $<$ & $<$ & $<$ & $<$ & $<$ & $<$ & $<$ & $<$ & $<$ \\ 
				1995 & ~ & ~ & $<$ & $<$ & $<$ & $<$ & $<$ & $<$ & $<$ & $<$ & $<$ & $<$ & $<$ \\ 
				1996 & ~ & ~ & $<$ & $<$ & $<$ & $<$ & $<$ & $<$ & $<$ & $<$ & $<$ & $<$ & $<$ \\ 
				1997 & ~ & ~ & ~ & ~ & ~ & $<$ & $<$ & $<$ & $<$ & $<$ & $<$ & $<$ & $<$ \\ 
				1998 & ~ & ~ & ~ & ~ & ~ & $<$ & $<$ & $<$ & $<$ & $<$ & $<$ & $<$ & $<$ \\ 
				1999 & ~ & ~ & ~ & ~ & ~ & $<$ & $<$ & $<$ & $<$ & $<$ & $<$ & $<$ & $<$ \\ 
				2000 & ~ & ~ & ~ & ~ & ~ & ~ & ~ & $<$ & $<$ & $<$ & $<$ & $>$ & $>$ \\ 
				2001 & ~ & ~ & ~ & ~ & ~ & ~ & ~ & $<$ & $<$ & $<$ & $<$ & $>$ & $>$ \\
				2005 & ~ & ~ & ~ & ~ & ~ & ~ & ~ & ~ & ~ & $<$ & $>$ & $>$ & $>$ \\ 
				2006 & ~ & ~ & ~ & ~ & ~ & ~ & ~ & ~ & ~ & ~ & $>$ & $>$ & $>$ \\ 
				2007 & ~ & ~ & ~ & ~ & ~ & ~ & ~ & ~ & ~ & ~ & $>$ & $>$ & $>$ \\ 
				2008 & ~ & ~ & ~ & ~ & ~ & ~ & ~ & ~ & ~ & ~ & ~ & $>$ & $>$ \\ 
				2009 & ~ & ~ & ~ & ~ & ~ & ~ & ~ & ~ & ~ & ~ & ~ & ~ & ~ \\ 
				
				\hline
				\hline		
				
			\end{tabular}
		}
		
		\label{tab:Application_I_upwards}
	\end{table}

		\begin{table}[h]
		
		\centering
		\caption{Ranking of Income
			Distributions by $3$rd-degree
			Downward Dominance Using $\mathcal{I}$}
		\scalebox{0.8}{
			\begin{tabular}{  c  c  c  c  c  c  c  c  c  c  c  c  c  c }
				\hline
				\hline
				
				Year & 1995 & 1996 & 1997 & 1998 & 1999 & 2000 & 2001 & 2005 & 2006 & 2007 & 2008 & 2009 & 2010 \\ \hline
				1994 & ~ & ~ & $<$ & $<$ & $<$ & $<$ & $<$ & $<$ & $<$ & $<$ & $<$ & $<$ & $<$ \\ 
				1995 & ~ & ~ & $<$ & $<$ & $<$ & $<$ & $<$ & $<$ & $<$ & $<$ & $<$ & $<$ & $<$ \\ 
				1996 & ~ & ~ & $<$ & $<$ & $<$ & $<$ & $<$ & $<$ & $<$ & $<$ & $<$ & $<$ & $<$ \\ 
				1997 & ~ & ~ & ~ & $<$ & $<$ & $<$ & $<$ & $<$ & $<$ & $<$ & $<$ & $<$ & $<$ \\ 
				1998 & ~ & ~ & ~ & ~ & ~ & $<$ & $<$ & $<$ & $<$ & $<$ & $<$ & $<$ & $<$ \\ 
				1999 & ~ & ~ & ~ & ~ & ~ & $<$ & $<$ & $<$ & $<$ & $<$ & $<$ & $<$ & $<$ \\  
				2000 & ~ & ~ & ~ & ~ & ~ & ~ & ~ & $<$ & $<$ & $<$ & $<$ & ~ & ~ \\ 
				2001 & ~ & ~ & ~ & ~ & ~ & ~ & ~ & $<$ & $<$ & $<$ & $<$ & $>$ & $>$ \\ 
				2005 & ~ & ~ & ~ & ~ & ~ & ~ & ~ & ~ & ~ & ~ & $>$ & $>$ & $>$ \\ 
				2006 & ~ & ~ & ~ & ~ & ~ & ~ & ~ & ~ & ~ & ~ & $>$ & $>$ & $>$ \\ 
				2007 & ~ & ~ & ~ & ~ & ~ & ~ & ~ & ~ & ~ & ~ & $>$ & $>$ & $>$ \\ 
				2008 & ~ & ~ & ~ & ~ & ~ & ~ & ~ & ~ & ~ & ~ & ~ & $>$ & $>$ \\ 
				2009 & ~ & ~ & ~ & ~ & ~ & ~ & ~ & ~ & ~ & ~ & ~ & ~ & ~ \\ 
				
				\hline
				\hline		
				
			\end{tabular}
		}
		
		\label{tab:Application_I_downwards}
	\end{table}

\section{Proofs of Results}
\begin{proof}[Proof of Lemma \ref{lemma.EVV}]
We closely follow the proof of Lemma A.2 in \citet{Beare2017improved}.
For $j=1,2$, $p\in\left[  0,1\right]  $, and $t\in\left[  0,1\right]
$, define%
\[
h_{j,p}\left(  t\right)  =-1\left(  t\leq p\right)  Q_{j}^{\prime}\left(
t\right)  .
\]
The almost sure integrability of $h_{j,p}\mathcal{B}_{j}$ follows
from the weak convergence of $n_{j}^{1/2}(\hat{Q}_{j}-Q_{j})\leadsto-Q_{j}^{\prime
}\mathcal{B}_{j}$ in $L^{1}\left(  \left[  0,1\right]  \right)  $ established by \citet{K17}. Define
\[
H_{j,p}\left(  u\right)  =\int_{0}^{u}h_{j,p}\left(  t\right)  \mathrm{d}%
t\text{ and }\mathcal{V}_{j}\left(  p\right)  =\int_{0}^{1}h_{j,p}\left(
t\right)  \mathcal{B}_{j}\left(  t\right)  \mathrm{d}t,\quad u,p\in\left[
0,1\right]  .
\]
For $j,j^{\prime}\in\left\{  1,2\right\}  $ and $p,p^{\prime}\in\left[
0,1\right]  $, by Fubini's theorem,
\[
Cov\left(  \mathcal{V}_{j}\left(  p\right)  ,\mathcal{V}_{j^{\prime}}(
p^{\prime})\right)    =\int_{0}^{1}\int_{0}^{1}h_{j,p}\left(
t_{1}\right)  h_{j^{\prime},p^{\prime}}\left(  t_{2}\right)  E\left[
\mathcal{B}_{j}\left(  t_{1}\right)  \mathcal{B}_{j^{\prime}}\left(
t_{2}\right)  \right]  \mathrm{d}t_{1}\mathrm{d}t_{2}.
\]

Let $\left(  U,V\right)  $ be a pair of random variables with joint CDF given
by the copula $C$. For $j=1$ and $j^{\prime}=2$, $E\left[  \mathcal{B}%
_{1}\left(  t_{1}\right)  \mathcal{B}_{2}\left(  t_{2}\right)  \right]
=C\left(  t_{1},t_{2}\right)  -t_{1}t_{2}$. By Theorem 3.1 of \citet{lo2017functional} (see also \citet{cuadras2002covariance} and \citet{beare2009generalization}), we have that 
\[
Cov\left(  H_{j,p}\left(  U\right)  ,H_{j^{\prime},p^{\prime}}\left(
V\right)  \right)  =\int_{0}^{1}\int_{0}^{1}h_{j,p}\left(  t_{1}\right)
h_{j^{\prime},p^{\prime}}\left(  t_{2}\right)  \left(  C\left(  t_{1}%
,t_{2}\right)  -t_{1}t_{2}\right)  \mathrm{d}t_{1}\mathrm{d}t_{2}.
\]
Then it follows that
\begin{align*}
E\left[  \mathcal{V}_{1}\left(  p\right)  \mathcal{V}_{2}(  p^{\prime
})  \right]   &  =Cov\left(  \mathcal{V}_{1}\left(  p\right)
,\mathcal{V}_{2}(  p^{\prime})  \right)  =Cov\left(  H_{1,p}\left(
F_{1}\left(  X^{1}\right)  \right)  ,H_{2,p^{\prime}}\left(  F_{2}\left(
X^{2}\right)  \right)  \right)  \\
&  =Cov\left(  Q_{1}\left(  p\right)  \wedge X^{1},Q_{2}(  p^{\prime
})  \wedge X^{2}\right)  .
\end{align*}
For $j=j^{\prime}$, $E\left[  \mathcal{B}_{j}\left(  t_{1}\right)
\mathcal{B}_{j}\left(  t_{2}\right)  \right]  =t_{1}\wedge t_{2}-t_{1}t_{2}$.
Theorem 3.1 of \citet{lo2017functional} implies that%
\[
Cov\left(  H_{j,p}\left(  U\right)  ,H_{j,p^{\prime}}\left(  U\right)
\right)  =\int_{0}^{1}\int_{0}^{1}h_{j,p}\left(  t_{1}\right)  h_{j,p'}\left(
t_{2}\right)  \left(  t_{1}\wedge t_{2}-t_{1}t_{2}\right)  \mathrm{d}%
t_{1}\mathrm{d}t_{2}.
\]
Then it follows that
\begin{align*}
E\left[  \mathcal{V}_{j}\left(  p\right)  \mathcal{V}_{j}(  p^{\prime
})  \right]   &  =Cov\left(  \mathcal{V}_{j}\left(  p\right)
,\mathcal{V}_{j}(  p^{\prime})  \right)  =Cov\left(  H_{j,p}\left(
F_{j}\left(  X^{j}\right)  \right)  ,H_{j,p^{\prime}}\left(  F_{j}\left(
X^{j}\right)  \right)  \right)  \\
&  =Cov\left(  Q_{j}\left(  p\right)  \wedge X^{j},Q_{j}(  p^{\prime
})  \wedge X^{j}\right)  .
\end{align*}
\end{proof}

\begin{proof}[Proof of Lemma \ref{lemma.weak convergence F}]
The results follow from \eqref{eq.weak convergence phi^u}, \eqref{eq.weak convergence phi^d}, Assumption \ref{ass.Hadamard directional differentiability}, and Theorem 2.1 of \citet{fang2014inference}.
\end{proof}

\begin{proof}[Proof of Lemma \ref{lemma.bootstrap phi asymptotic distribution}]
By Lemma 5.2 of \citet{Beare2017improved}, Theorem 2.9 of \citet{kosorok2008introduction}, and Lemma 7.4 of \citet{K17} (which holds under Assumption \ref{ass.distribution}), we have that 
\[
\left(
\begin{array}
[c]{c}%
n_{1}^{1/2}(  \hat{Q}_{1}^{\ast}-\hat{Q}_{1})  \\
n_{2}^{1/2}(  \hat{Q}_{2}^{\ast}-\hat{Q}_{2})
\end{array}
\right)  \overset{\mathbb{P}}{\leadsto} \left(
\begin{array}
[c]{c}%
-Q_{1}^{\prime}\cdot\mathcal{B}_{1}\\
-Q_{2}^{\prime}\cdot\mathcal{B}_{2}%
\end{array}
\right)  \text{ in }L^{1}\left(  [0,1]\right)  \times L^{1}\left(  [0,1]\right)  .
\]
The result follows from Proposition 10.7 of \citet{kosorok2008introduction} (a conditional version of continuous mapping theorem).
\end{proof}

\begin{proof}[Proof of Lemma \ref{lemma.SI}]
As shown in the proof of Proposition 3.2 of \citet{Beare2017improved}, $\widehat{\mathcal{S}'_{\phi_m^w}}$ and $\widehat{\mathcal{I}'_{\phi_m^w}}$ are both Lipschitz continuous. 

For all sets $A,B\subset\left[  0,1\right]  $, define $\tilde{d}\left(
A,B\right)  =\sup_{a\in A}\inf_{b\in B}\left\vert a-b\right\vert $ and
\[
d\left(  A,B\right)  =\max\left\{  \tilde{d}\left(  A,B\right)  ,\tilde{d}\left(
B,A\right)  \right\}  .
\]
For every $\varepsilon>0$, we have that
\begin{align*}
\mathbb{P}\left(  \tilde{d}\left(  B_{0}\left(  \phi_{m}^{w}\right)
,\widehat{B_{0}\left(  \phi_{m}^{w}\right)  }\right)  >\varepsilon\right)   &
\leq\mathbb{P}\left(  B_{0}\left(  \phi_{m}^{w}\right)  \setminus
\widehat{B_{0}\left(  \phi_{m}^{w}\right)  }\neq\varnothing\right)  \\
&  \leq\mathbb{P}\left(  \sup_{p\in B_{0}\left(  \phi_{m}^{w}\right)
\setminus\widehat{B_{0}\left(  \phi_{m}^{w}\right)  }}\left\vert T_{n}%
^{1/2}\left\{  \hat{\phi}_{m}^{w}\left(  p\right)  -\phi_{m}^{w}\left(
p\right)  \right\}  \right\vert >\tau_{n}\xi^{1/2}
\right)  \\
&  \leq\mathbb{P}\left(  \sup_{p\in\left[  0,1\right]  }\left\vert T_{n}%
^{1/2}\left\{  \hat{\phi}_{m}^{w}\left(  p\right)  -\phi_{m}^{w}\left(
p\right)  \right\}  \right\vert >\tau_{n}\xi^{1/2}\right)  .
\end{align*}
Since $T_{n}^{1/2}( \hat{\phi}_{m}^{w}-\phi_{m}^{w})
\leadsto\mathbb{G}_{m}^{w}$ and $\tau_{n}\rightarrow\infty$, by continuous
mapping theorem and Slutsky's lemma,
\[
\tau_{n}^{-1}\sup_{p\in\left[  0,1\right]  }\left\vert T_{n}^{1/2}\left\{
\hat{\phi}_{m}^{w}\left(  p\right)  -\phi_{m}^{w}\left(  p\right)  \right\}
\right\vert \rightarrow_{p}0.
\]
This implies
\[
\mathbb{P}\left(  \tilde{d}\left(  B_{0}\left(  \phi_{m}^{w}\right)
,\widehat{B_{0}\left(  \phi_{m}^{w}\right)  }\right)  >\varepsilon\right)
\rightarrow0.
\]
For every $p\in\mathbb{R}  $ and $A\subset\mathbb{R}  $,
define $e\left(  p,A\right)  =\inf_{a\in A}\left\vert p-a\right\vert $. For
every $\varepsilon>0$, define
\[
D_{\varepsilon}=\left\{  p\in\left[  0,1\right]  :e\left(  p,B_{0}\left(
\phi_{m}^{w}\right)  \right)  \geq\varepsilon\right\}  .
\]
Suppose $D_{\varepsilon}\neq\varnothing$ and there is a sequence $p_{n}\in
D_{\varepsilon}$ such that $p_{n}\rightarrow p_{0}\in\mathbb{R}$. Then we have that
\begin{align*}
e\left(  p_{0},B_{0}\left(  \phi_{m}^{w}\right)  \right)   &  =\inf
_{p^{\prime}\in B_{0}\left(  \phi_{m}^{w}\right)  }\left\vert p_{0}-p^{\prime
}\right\vert =\inf_{p^{\prime}\in B_{0}\left(  \phi_{m}^{w}\right)
}\left\vert p_{0}-p_{n}+p_{n}-p^{\prime}\right\vert \\
&  \geq\inf_{p^{\prime}\in B_{0}\left(  \phi_{m}^{w}\right)  }\left\vert
p_{n}-p^{\prime}\right\vert -\left\vert p_{0}-p_{n}\right\vert \geq
\varepsilon-\left\vert p_{0}-p_{n}\right\vert \rightarrow\varepsilon,
\end{align*}
which implies $p_{0}\in D_{\varepsilon}$. Thus, $D_{\varepsilon}$ is closed and
compact. Since $\phi_{m}^{w}$ is continuous, then for every $\varepsilon>0$,
there is some $\delta_{\varepsilon}>0$ such that $\inf_{p\in D_{\varepsilon}%
}\left\vert \phi_{m}^{w}\left(  p\right)  \right\vert >\delta_{\varepsilon}$.
Now fix $\varepsilon>0$. We have that
\begin{align*}
&\mathbb{P}\left(  \tilde{d}\left(  \widehat{B_{0}\left(  \phi_{m}^{w}\right)
},B_{0}\left(  \phi_{m}^{w}\right)  \right)  >\varepsilon\right)   
=\mathbb{P}\left(  \sup_{p_{1}\in\widehat{B_{0}\left(  \phi_{m}^{w}\right)  }%
}\inf_{p_{2}\in B_{0}\left(  \phi_{m}^{w}\right)  }\left\vert p_{1}%
-p_{2}\right\vert >\varepsilon\right)  \\
&  \leq\mathbb{P}\left(  \sup_{p\in\widehat{B_{0}\left(  \phi_{m}^{w}\right)
}\setminus B_{0}\left(  \phi_{m}^{w}\right)  }\left\vert \phi_{m}^{w}\left(
p\right)  \right\vert >\delta_{\varepsilon},\sup_{p\in\widehat{B_{0}\left(
\phi_{m}^{w}\right)  }\setminus B_{0}\left(  \phi_{m}^{w}\right)  }\left\vert
T_{n}^{1/2}\hat{\phi}_{m}^{w}\left(  p\right) /\hat
{v}_{m}^{w}\left(  p\right) \right\vert \leq\tau_{n}  \right)  .
\end{align*}
By arguments similar to those in the proof of Lemma A.2 in \citet{Beare2017improved}, it can be
shown that there is some $M>0$ such that $\limsup_{n\to\infty}\sup_{p\in\left[  0,1\right]  }\hat{v}_{m}^{w}\left(  p\right)  <M$ almost surely, which implies 
$\mathbb{P}(\sup_{p\in\left[  0,1\right]  }\hat{v}_{m}^{w}\left(  p\right)  <M)\to 1$.
Let $A_n=\{\sup_{p\in\left[  0,1\right]  }\hat{v}%
_{m}^{w}\left(  p\right)    <M\}$ and 
\[
B_{n}=\left\{
\begin{array}
[c]{c}%
\sup_{p\in\widehat{B_{0}\left(  \phi_{m}^{w}\right)  }\setminus B_{0}\left(
\phi_{m}^{w}\right)  }\left\vert \phi_{m}^{w}\left(  p\right)  \right\vert
-\frac{\delta_{\varepsilon}}{2}\leq\sup_{p\in\widehat{B_{0}\left(  \phi
_{m}^{w}\right)  }\setminus B_{0}\left(  \phi_{m}^{w}\right)  }\left\vert
\hat{\phi}_{m}^{w}\left(  p\right)  \right\vert \\
\leq\sup_{p\in\widehat{B_{0}\left(  \phi_{m}^{w}\right)  }\setminus
B_{0}\left(  \phi_{m}^{w}\right)  }\left\vert \phi_{m}^{w}\left(  p\right)
\right\vert +\frac{\delta_{\varepsilon}}{2}
\end{array}
\right\} .
\]
By Lemma 3 of \citet{BDB14} and the strong law of large numbers, we have that
\[
\sup_{p\in\left[  0,1\right]  }\left\vert \hat{\Lambda}_{j}^{2}\left(
p\right)  -\Lambda_{j}^{2}\left(  p\right)  \right\vert \rightarrow0
\]
almost surely. This implies that $\sup_{p\in\left[  0,1\right]  }|\hat{\phi
}_{m}^{w}\left(  p\right)  -\phi_{m}^{w}\left(  p\right)  |\rightarrow0$
almost surely and $\mathbb{P}(A_{n}\cap B_n)\rightarrow1$. Let $C_n=A_n\cap B_n$. We then have
\begin{align*}
&  \mathbb{P}\left(  \sup_{p\in\widehat{B_{0}\left(  \phi_{m}^{w}\right)
}\setminus B_{0}\left(  \phi_{m}^{w}\right)  }\left\vert \phi_{m}^{w}\left(
p\right)  \right\vert >\delta_{\varepsilon},\sup_{p\in\widehat{B_{0}\left(
\phi_{m}^{w}\right)  }\setminus B_{0}\left(  \phi_{m}^{w}\right)  }\left\vert
T_{n}^{1/2}\hat{\phi}_{m}^{w}\left(  p\right)/\hat
{v}_{m}^{w}\left(  p\right)  \right\vert \leq\tau_{n}  \right)  \\
\leq&  \,\mathbb{P}\left(  \sup_{p\in\widehat{B_{0}\left(  \phi_{m}^{w}\right)
}\setminus B_{0}\left(  \phi_{m}^{w}\right)  }\left\vert \phi_{m}^{w}\left(
p\right)  \right\vert >\delta_{\varepsilon},\sup_{p\in\widehat{B_{0}\left(
\phi_{m}^{w}\right)  }\setminus B_{0}\left(  \phi_{m}^{w}\right)  }\left\vert
T_{n}^{1/2}\hat{\phi}_{m}^{w}\left(  p\right) /\hat
{v}_{m}^{w}\left(  p\right) \right\vert \leq\tau_{n}  ,C_{n}\right)  +\mathbb{P}\left(  C_{n}%
^{c}\right)  \\
  \leq&\,\mathbb{P}\left(  \sup_{p\in\widehat{B_{0}\left(  \phi_{m}^{w}\right)
}\setminus B_{0}\left(  \phi_{m}^{w}\right)  }\left\vert \phi_{m}^{w}\left(
p\right)  \right\vert >\delta_{\varepsilon},\sup_{p\in\widehat{B_{0}\left(
\phi_{m}^{w}\right)  }\setminus B_{0}\left(  \phi_{m}^{w}\right)  }\left\vert
\phi_{m}^{w}\left(  p\right)  \right\vert \leq\frac{\delta_{\varepsilon}}%
{2}+T_{n}^{-1/2}\tau_{n}M  ,C_{n}\right)
  \\
& +\mathbb{P}\left(  C_{n}^{c}\right) \rightarrow0.
\end{align*}
This implies
\[
\mathbb{P}\left( \tilde d\left(  \widehat{B_{0}\left(  \phi_{m}^{w}\right)  }%
,B_{0}\left(  \phi_{m}^{w}\right)  \right)  >\varepsilon\right)  \rightarrow0.
\]
Thus, we conclude that $d(B_{0}\left(  \phi_{m}^{w}\right)  ,\widehat
{B_{0}\left(  \phi_{m}^{w}\right)  })\rightarrow0$ in probability.
With the above results, we have that for every $h\in C\left(  \left[
0,1\right]  \right)  $,
\[
\left\vert \widehat{\mathcal{S}_{\phi_{m}^{w}}^{\prime}}\left(  h\right)
-\mathcal{S}_{\phi_{m}^{w}}^{\prime}\left(  h\right)  \right\vert \leq
\sup_{p,p^{\prime}\in\left[  0,1\right]  ,\left\vert p-p^{\prime}\right\vert
\leq d\left(  \widehat{B_{0}\left(  \phi_{m}^{w}\right)  },B_{0}\left(
\phi_{m}^{w}\right)  \right)  }\left\vert h\left(  p\right)  -h(p^{\prime
})\right\vert .
\]
Since $h$ is uniformly continuous on $\left[  0,1\right]  $, for every
$\varepsilon>0$, there is $\delta>0$ such that \\$\left\vert h\left(  p\right)
-h(p^{\prime})\right\vert \leq\varepsilon$ for all $p,p^{\prime}$ with
$|p-p^{\prime}|\leq\delta$. Thus,
\[
\mathbb{P}\left(  \left\vert \widehat{\mathcal{S}_{\phi_{m}^{w}}^{\prime}%
}\left(  h\right)  -\mathcal{S}_{\phi_{m}^{w}}^{\prime}\left(  h\right)
\right\vert >\varepsilon\right)  \leq\mathbb{P}\left(  d\left(  \widehat
{B_{0}\left(  \phi_{m}^{w}\right)  },B_{0}\left(  \phi_{m}^{w}\right)
\right)  >\delta\right)  \rightarrow0.
\]
By Lemma S.3.6 of \citet{fang2014inference}, $\widehat{\mathcal{S}_{\phi_{m}^{w}}^{\prime}}$
satisfies Assumption \ref{ass.FS}.

Let $\mu$ denote the Lebesgue measure. Then as shown above, for every $\varepsilon>0$,
\[
\mathbb{P}\left(  \mu\left(  B_{0}\left(  \phi_{m}^{w}\right)  \setminus
\widehat{B_{0}\left(  \phi_{m}^{w}\right)  }\right)  >\varepsilon\right)
\leq\mathbb{P}\left(  B_{0}\left(  \phi_{m}^{w}\right)  \setminus
\widehat{B_{0}\left(  \phi_{m}^{w}\right)  }\neq\varnothing\right)
\rightarrow0.
\]
Also, by definition,
\begin{align*}
\widehat{B_{0}\left(  \phi_{m}^{w}\right)  }\setminus B_{0}\left(  \phi
_{m}^{w}\right)   &  =\left\{  p\in\left[  0,1\right]  :\left\vert \hat{\phi
}_{m}^{w}(p)\right\vert \leq T_{n}^{-1/2}\tau_{n}\hat{v}_{m}^{w}\left(
p\right)  ,\left\vert \phi_{m}^{w}(p)\right\vert >0\right\}   .
\end{align*}
Let $A=\{\limsup_{n\to\infty}\{\sup_{p\in\left[  0,1\right]  }\hat{v}_{m}^{w}\left(  p\right)\}  <M,\sup_{p\in\left[  0,1\right]  }|\hat{\phi}%
_{m}^{w}(p)-\phi_{m}^{w}(p)|\rightarrow0\}$. Fix $\omega\in A$. 
We have that
\[
\sup_{p\in\left[  0,1\right]  }\left\vert \left\vert \phi_{m}^{w}\left(p\right)  \right\vert -\vert \hat{\phi}_{m}^{w}\left(  p\right)  \vert \right\vert \leq\sup_{p\in\left[0,1\right]  }|\hat{\phi}_{m}^{w}(p)-\phi_{m}^{w}(p)|\rightarrow0.
\]
Then we can find a sequence $\left\{  t_{n}\left(  \omega\right)  \right\}  $
such that $t_{n}\left(  \omega\right)  \downarrow0$ and for all $p\in[0,1]$,
\[
\left\vert \phi_{m}^{w}\left(  p\right)
\right\vert \leq\left\vert \hat{\phi}_{m}%
^{w}\left(  p\right)  \right\vert +t_{n}\left(  \omega\right)  .
\]
For such $\omega$, when $n$ is sufficiently large, it follows that for every $p\in\widehat{B_{0}\left(  \phi_{m}^{w}\right)  }\setminus B_{0}\left(  \phi
_{m}^{w}\right)$,
\[
\left\vert \phi_{m}^{w}\left(  p\right)
\right\vert \leq T_{n}^{-1/2}\tau_{n}\sup_{p\in\left[  0,1\right]  }\hat
{v}_{m}^{w}\left(  p\right)  +t_{n}\left(  \omega\right)  \leq T_{n}%
^{-1/2}\tau_{n}M+t_{n}\left(  \omega\right)  .
\]
Thus, we have that for such $\omega$, when $n$ is sufficiently large,
\[
\widehat{B_{0}\left(  \phi_{m}^{w}\right)  }\setminus B_{0}\left(  \phi
_{m}^{w}\right)  \subset\left\{  p\in\left[  0,1\right]  :0<\left\vert
\phi_{m}^{w}(p)\right\vert \leq t_{n}\left(  \omega\right)  +T_{n}^{-1/2}
\tau_{n}M\right\}  \equiv B_{n}\left(  \omega\right)  .
\]
Clearly, $B_{n}\left(  \omega\right)  \supset B_{n+1}\left(  \omega\right)
\supset\cdots$. It then follows that
\[
\lim_{n\rightarrow\infty}\mu\left(  B_{n}\left(  \omega\right)  \right)
=\mu\left(  \cap_{n=1}^{\infty}B_{n}\left(  \omega\right)  \right)  =0.
\]
This implies that for every $\omega\in A$,
\[
\mu\left(  \widehat{B_{0}\left(  \phi_{m}^{w}\right)  }\setminus B_{0}\left(
\phi_{m}^{w}\right)  \right)  \rightarrow0.
\]
As we have shown above, $\mathbb{P}\left(  A\right)  =1$. Thus, $\mu
(\widehat{B_{0}\left(  \phi_{m}^{w}\right)  }\setminus B_{0}\left(  \phi
_{m}^{w}\right)  )\rightarrow0$ almost surely. Now we have that under
$\mathrm{H}_{0}$ ($\tilde{\mathrm{H}}_{0}$), for every $h\in C\left(  \left[
0,1\right]  \right)  $,
\begin{align*}
\left\vert \widehat{\mathcal{I}_{\phi_{m}^{w}}^{\prime}}\left(  h\right)
-\mathcal{I}_{\phi_{m}^{w}}^{\prime}\left(  h\right)  \right\vert  & \leq
\max_{p\in\left[  0,1\right]  }\left\vert h\left(  p\right)  \right\vert
\cdot\left\{  \mu\left(  B_{0}\left(  \phi_{m}^{w}\right)  \setminus
\widehat{B_{0}\left(  \phi_{m}^{w}\right)  }\right)  +\mu\left(
\widehat{B_{0}\left(  \phi_{m}^{w}\right)  }\setminus B_{0}\left(  \phi
_{m}^{w}\right)  \right)  \right\}  \\
& \rightarrow_{p}0.
\end{align*}
By Lemma S.3.6 of \citet{fang2014inference}, $\widehat{\mathcal{I}_{\phi_{m}^{w}}^{\prime}}$
satisfies Assumption \ref{ass.FS}.
\end{proof}

\begin{proof}[Proof of Lemma \ref{lemma.bsconsistent}]
With Assumptions \ref{ass.Hadamard directional differentiability} and \ref{ass.FS}, \eqref{eq.weak convergence phi^u} and \eqref{eq.weak convergence phi^d}, and Lemma \ref{lemma.bootstrap phi asymptotic distribution}, the result follows from Theorem 3.2 of \citet{fang2014inference}. 
\end{proof}

\begin{proof}[Proof of Proposition \ref{prop.test}]
(i). Under the assumptions, the result follows from Theorem 3.3 of \citet{fang2014inference}. (ii). Under $\mathrm{H}_1$ ($\tilde{\mathrm{H}}_1$), by Theorem 1.3.6 of \citet{van1996weak}, $\mathcal{F}(\hat{\phi}_m^w)\to\mathcal{F}({\phi}_m^w)>0$ in probability. 
We define 
\begin{align}\label{eq.c tilde}
\tilde{c}_{m1-\alpha}^w=\inf\left\{c\in\mathbb R: \mathbb{P}\left(   {\mathcal F}(T_n^{1/2}(\hat\phi_m^{w\ast}-\hat\phi_m^w))\le c\,\middle|\,   \{  X_{i}^{1}\}  _{i=1}^{n_{1}}  ,\{  X_{i}^{2}\}  _{i=1}^{n_{2}}      \right) \ge 1-\alpha   \right\}.
\end{align} 
By similar arguments in the proof of Theorem S.1.1 of \citet{fang2014inference}, $\tilde{c}_{m1-\alpha}^w$ converges in probability to the $1-\alpha$ quantile of $\mathcal{F}(\mathbb{G}_m^w)$. Since by definition $\hat{c}_{m1-\alpha}^w\le\tilde{c}_{m1-\alpha}^w$, then $\mathcal{F}(\hat{\phi}_m^w)-T_n^{-1/2}\hat{c}_{m1-\alpha}^w$ converges to $\mathcal{F}({\phi}_m^w)$ in probability.
\end{proof}

\bibliographystyle{apalike}
\bibliography{reference1}
\end{document}